%
\documentclass[runningheads]{llncs}
\usepackage[T1]{fontenc}
%

\usepackage{amssymb}
\usepackage{amsmath}
\usepackage{bm}
\usepackage[dvipdfmx]{graphicx}
\usepackage{caption}

\usepackage{algorithmic}
\usepackage{algorithm}
\usepackage{float} 
\usepackage{subcaption}
\usepackage{cases}


%
%
\begin{document}
\title{An Algorithm for Computing the Leading Monomials of a Minimal Gröbner Basis of Generic Sequences}
\titlerunning{An algorithm for computing the leading monomials of generic sequences}
%
\author{Kosuke Sakata\inst{1} \and
Tsuyoshi Takagi\inst{1}}
\authorrunning{K. Sakata et al.}
%
\institute{Department of Mathematical Informatics, The University of Tokyo, Japan
\email{\{kosuke-sakata-rb,takagi\}@g.ecc.u-tokyo.ac.jp}
}
\maketitle              
%

    

\begin{abstract}

We present an efficient algorithm for computing the leading monomials of a minimal Gröbner basis of a generic sequence of homogeneous polynomials. Our approach bypasses costly polynomial reductions by exploiting structural properties conjectured to hold for generic sequences—specifically, that their leading monomial ideals are weakly reverse lexicographic and that their Hilbert series follow a known closed-form expression. The algorithm incrementally constructs the set of leading monomials degree by degree by comparing Hilbert functions of monomial ideals with the expected Hilbert series of the input ideal. To enhance computational efficiency, we introduce several optimization techniques that progressively narrow the search space and reduce the number of divisibility checks required at each step. We also refine the loop termination condition using degree bounds, thereby avoiding unnecessary recomputation of Hilbert series. 
Experimental results confirm that the proposed method substantially reduces both computation time and memory usage compared to conventional Gröbner basis computations for computing the leading monomials of a minimal Gröbner basis of generic sequences. 

\keywords{Gröbner basis \and generic sequence \and Hilbert series \and polynomial system solving}

\end{abstract}

\section{Introduction}\label{sec:intro}

Gröbner basis computation is a central tool in computational algebra, with wide applications in solving systems of polynomial equations and analyzing ideals in polynomial rings. Despite advances in algorithms such as Buchberger's algorithm~\cite{Buch,BC} and the $F_4$/$F_5$ algorithms~\cite{F4,F5}, the cost of computing Gröbner bases remains high, especially for systems with many variables or higher-degree inputs. A substantial portion of this cost comes from the need for extensive polynomial reductions and S-polynomial computations.

When the input system satisfies certain genericity assumptions, its algebraic structure becomes more tractable. In particular, it is conjectured that a generic sequence of homogeneous polynomials yields a weakly reverse-lexicographic leading monomial ideal under the graded reverse lexicographic order~\cite{Pardue,MS}. This conjecture, originally due to Moreno--Socías, implies an explicit formula for the Hilbert series of the ideal, and allows the structure of the Gröbner basis to be predicted without full reduction.

Motivated by this observation, we propose an algorithm named LGB for computing the leading monomials of a minimal Gröbner basis of a generic sequence. The design of LGB is inspired by Hilbert-driven strategies, which are known for predicting the number and degree of basis elements using Hilbert series.  LGB iteratively constructs the set of leading monomials degree by degree by comparing the Hilbert function of the constructed monomial ideal to the target Hilbert series derived from the input.

To further improve efficiency, we introduce a sequence of refinements: (i) replacing the termination condition with a degree-based bound, (ii) reducing the number of monomials that must be tested for membership, and (iii) minimizing the number of leading monomials used for divisibility checks.

This paper is structured as follows. Section~\ref{sec:pre} reviews basic concepts including polynomial rings, Gröbner bases, Hilbert series, and generic sequences. In Section~\ref{sec:LGB}, we introduce the LGB algorithm and prove its correctness and termination, and a toy example is presented. Sections~\ref{sec:d},\ref{sec:M},\ref{sec:L} describe successive improvements to the algorithm, targeting termination, monomial filtering, and generator reduction. Section~\ref{sec:imp} presents the final improved version of LGB and validates its performance through experiments for computing the leading monomials of a minimal Gröbner basis of generic sequences. 

\section{Preliminary}\label{sec:pre}
\subsection{Polynomial Ring}\label{sub:poly}

Let \( \mathbb{N} \) denote the set of non-negative integers and let \( k \) be a field. Define \( R = k[x_1, \ldots, x_n] \) as the polynomial ring in \( n \) variables over \( k \). Every polynomial \( f \in R \) can be uniquely written as a finite sum
$
f = \sum_{\alpha \in \mathbb{N}^n} c_\alpha x^\alpha = \sum_{\alpha \in \mathbb{N}^n} c_\alpha x_1^{\alpha_1} \cdots x_n^{\alpha_n},
$
with \( c_\alpha \in k \). Each such expression \( c_\alpha x^\alpha \) is called a \emph{term}, and \( x^\alpha \) is called a \emph{monomial}.

We denote by \( M \) the set of all monomials in \( R \), i.e.,
$
M = \left\{ x^\alpha \mid \alpha \in \mathbb{N}^n \right\}.
$
For any integer \( d \geq 0 \), we write \( M_d = \{ m \in M \mid \deg(m) = d \} \) for the set of all monomials of total degree \( d \). The total degree of a polynomial \( f \in R \) is denoted by \( \deg(f) \).

Given a set of polynomials \( F \subset R \), we write \( \langle F \rangle \) for the ideal generated by \( F \). An ideal is called a \emph{monomial ideal} if it admits a generating set consisting solely of monomials.

In the theory of Gröbner bases, a \emph{term order} plays a central role. A term order is a total order on \( M \). Throughout this paper, we use the \emph{graded reverse lexicographic order} (grevlex), defined as follows: for two monomials
$
x^\alpha = x_1^{\alpha_1} \cdots x_n^{\alpha_n}, x^\beta = x_1^{\beta_1} \cdots x_n^{\beta_n},
$
we declare \( x^\alpha < x^\beta \) if either \( \deg(x^\alpha) < \deg(x^\beta) \), or \( \deg(x^\alpha) = \deg(x^\beta) \) and, reading indices from right to left, the first index \( i \) such that \( \alpha_i \neq \beta_i \) satisfies \( \alpha_i < \beta_i \).

The \emph{leading term} \( \mathrm{LT}(f) \) of a polynomial \( f \in R \) is defined to be the term of \( f \) that is largest with respect to the chosen term order. The monomial part of the leading term is denoted by \( \mathrm{LM}(f) \), called the leading monomial. A polynomial is said to be \emph{monic} if the coefficient of its leading term is equal to 1.
Given a finite set of polynomials $F = \{f_1, \ldots, f_m\} \subset R$, we define the leading monomial set $\mathrm{LM}(F)$ as the set of leading monomials of its elements, that is,
$
\mathrm{LM}(F) := \{\mathrm{LM}(f_i) \mid 1 \leq i \leq m \}.
$
\subsection{Gröbner Bases}\label{sub:GB}

Let \( I \subset R \) be an ideal, and let \( G = \{g_1, \dots, g_s\} \subset I \) be a finite set of non-zero polynomials.

\begin{definition}
The set \( G \) is called a \emph{Gröbner basis} of \( I \) if, for every non-zero polynomial \( f \in I \), there exists \( g \in G \) such that \( \mathrm{LM}(g) \mid \mathrm{LM}(f) \).
\end{definition}

A Gröbner basis \( G \) is said to be \emph{minimal} if all elements of \( G \) are monic, and for any distinct pair \( g, g' \in G \), the leading monomial \( \mathrm{LM}(g') \) does not divide \( \mathrm{LM}(g) \).
Furthermore, \( G \) is called a \emph{reduced Gröbner basis} if it is minimal and, in addition, for every \( g \in G \) and for every term \( r \) occurring in \( g \), no leading monomial from \( G \setminus \{g\} \) divides \( r \).

\begin{definition}
Let \( I \subset R \) be a homogeneous ideal and \( G \subset I \) a Gröbner basis. For any non-negative integer \( d \), define
\[
G_{\le d} := \{ g \in G \mid \deg(g) \le d \}.
\]
Then \( G_{\le d} \) is called a \emph{Gröbner basis up to degree \( d \)}. If \( G \) is minimal, then \( G_{\le d} \) is called a \emph{minimal Gröbner basis up to degree \( d \)}.
\end{definition}




\subsection{Hilbert Series}

Every polynomial \( f \in R \) admits a unique decomposition into homogeneous components:
$
f = f_0 + f_1 + \cdots + f_d, \quad \text{where } f_i \in R_i,
$
with \( R_i \) denoting the \( k \)-vector space of homogeneous polynomials of degree \( i \). Accordingly, the ring \( R \) admits a graded decomposition:
$
R = \bigoplus_{i \geq 0} R_i.
$

Given a homogeneous ideal \( I \subset R \), define \( I_i := I \cap R_i \). Then \( I \) inherits a graded structure:
$
I = \bigoplus_{i \geq 0} I_i, \text{and } R/I = \bigoplus_{i \geq 0} (R/I)_i, \text{with } (R/I)_i \cong R_i / I_i.
$

\begin{definition}[Hilbert function and Hilbert series]
Let \( I \subset R \) be a homogeneous ideal. The Hilbert function of \( R/I \) is defined by
$
h_{R/I}(i) := \dim_k (R/I)_i 
$
for $i \geq 0$.
The associated generating function
$
H_{R/I}(z) := \sum_{i \geq 0} h_{R/I}(i)\, z^i
$
is called the Hilbert series of \( R/I \).
\end{definition}

\begin{proposition}[\cite{CLO}]
Let \( I \subset R \) be an ideal, and let \( G \subset I \) be a finite set of polynomials. Then, for all \( d \geq 0 \),
\[
h_{R/I}(d) \leq h_{R/\langle \mathrm{LM}(G) \rangle}(d).
\]
Moreover, \( G \) is a Gröbner basis of \( I \) if and only if equality holds for all \( d \geq 0 \).
\end{proposition}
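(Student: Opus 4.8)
The plan is to derive both assertions from the classical \emph{standard monomial basis theorem}: writing $\langle \mathrm{LM}(I) \rangle := \langle\, \mathrm{LM}(f) \mid f \in I,\ f \neq 0 \,\rangle$ for the leading monomial ideal of $I$ (a monomial, hence homogeneous, ideal; throughout we take $I$ homogeneous, as is needed for $h_{R/I}$ to be defined), the monomials \emph{not} lying in $\langle \mathrm{LM}(I) \rangle$ — the standard monomials — descend to a $k$-basis of $R/I$, and this basis is compatible with the grading. Granting this, the whole statement reduces to comparing three graded rings: $R/I$, $R/\langle\mathrm{LM}(I)\rangle$, and $R/\langle\mathrm{LM}(G)\rangle$.

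First I would establish $h_{R/I}(d) = h_{R/\langle\mathrm{LM}(I)\rangle}(d)$ for all $d$. Linear independence of the degree-$d$ standard monomials in $(R/I)_d$ is clear, since a nonzero $k$-combination of distinct standard monomials has a leading monomial that is again standard, hence does not lie in $I$. For spanning, take a homogeneous $f$ of degree $d$; note that with grevlex (a graded order) $\langle\mathrm{LM}(I)\rangle$ is generated by leading monomials of homogeneous elements of $I$, so as long as $\mathrm{LM}(f)$ is divisible by $\mathrm{LM}(g)$ for some homogeneous $g \in I$, one subtracts the appropriate term $c\,m\,g$ to cancel the leading term, which strictly decreases $\mathrm{LM}(f)$ in grevlex order while keeping both the residue class modulo $I$ and the degree fixed; since $M_d$ is finite and grevlex restricts to a well-order on it, this terminates at a representative supported on standard monomials. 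Thus the standard monomials of degree $d$ form a basis of $(R/I)_d$, which is also the monomial basis of $(R/\langle\mathrm{LM}(I)\rangle)_d$, giving the claimed equality. The inequality now follows: from $G \subset I$ we get $\langle\mathrm{LM}(G)\rangle \subseteq \langle\mathrm{LM}(I)\rangle$, so in each degree $(\langle\mathrm{LM}(G)\rangle)_d \subseteq (\langle\mathrm{LM}(I)\rangle)_d$ inside $R_d$, and taking codimensions yields $h_{R/\langle\mathrm{LM}(I)\rangle}(d) \leq h_{R/\langle\mathrm{LM}(G)\rangle}(d)$; combined with the equality above this gives $h_{R/I}(d) \leq h_{R/\langle\mathrm{LM}(G)\rangle}(d)$.

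For the equivalence, I would first note that, by the definition in Section~\ref{sub:GB}, $G$ is a Gröbner basis of $I$ exactly when every $\mathrm{LM}(f)$ with $0 \neq f \in I$ lies in $\langle\mathrm{LM}(G)\rangle$, i.e. when $\langle\mathrm{LM}(I)\rangle \subseteq \langle\mathrm{LM}(G)\rangle$; since the reverse inclusion always holds, this is the same as $\langle\mathrm{LM}(G)\rangle = \langle\mathrm{LM}(I)\rangle$. If that equality holds then trivially $h_{R/\langle\mathrm{LM}(G)\rangle} = h_{R/\langle\mathrm{LM}(I)\rangle} = h_{R/I}$ in every degree. Conversely, if $h_{R/I}(d) = h_{R/\langle\mathrm{LM}(G)\rangle}(d)$ for all $d$, then in each degree $(\langle\mathrm{LM}(G)\rangle)_d$ is a subspace of $(\langle\mathrm{LM}(I)\rangle)_d$ of equal finite dimension, hence equal; ranging over $d$ gives $\langle\mathrm{LM}(G)\rangle = \langle\mathrm{LM}(I)\rangle$, so $G$ is a Gröbner basis. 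The only real work is hidden in the standard monomial basis theorem, and within it the one nontrivial point is the termination of the reduction procedure — which is precisely where the well-ordering property of grevlex enters; everything else is bookkeeping with homogeneous components and dimension counts.
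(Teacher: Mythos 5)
The paper states this proposition without proof, deferring to \cite{CLO}; your argument is precisely the standard one from that source — Macaulay's basis theorem identifying \(h_{R/I}(d)\) with \(h_{R/\langle \mathrm{LM}(I)\rangle}(d)\) via the standard monomials, combined with the inclusion \(\langle \mathrm{LM}(G)\rangle \subseteq \langle \mathrm{LM}(I)\rangle\) and the observation that \(G\) is a Gröbner basis exactly when this inclusion is an equality — and it is correct. The one loose spot is the spanning step: cancelling the leading term only guarantees that \(\mathrm{LM}(f)\) eventually becomes standard, so to reach a representative supported entirely on standard monomials you must also recurse on the tail; this is the usual induction on \(\mathrm{LM}(f)\) over the finite totally ordered set \(M_d\) and changes nothing of substance.
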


\begin{proposition}[\cite{CLO}]
Suppose \( I \subset R \) is a homogeneous ideal and \( G \subset I \) is a finite set. Then the following statements hold:
\begin{enumerate}
    \item The truncated Hilbert series \( H_{R/I}(z) \) and \( H_{R/\langle \mathrm{LM}(G) \rangle}(z) \) agree up to degree \( d \) if and only if \( G_{\le d} \) is a Gröbner basis of \( I \) up to degree \( d \).
    \item The equality \( H_{R/I}(z) = H_{R/\langle \mathrm{LM}(G) \rangle}(z) \) holds if and only if \( G \) is a Gröbner basis of \( I \).
\end{enumerate}
\end{proposition}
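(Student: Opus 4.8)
The plan is to funnel both parts through the previous proposition by means of a single bookkeeping device. First I would fix a finite Gröbner basis $G_0$ of $I$, which exists by Hilbert's basis theorem, and set $J := \langle \LM(G_0)\rangle$; by the definition of a Gröbner basis, $J$ is exactly the initial ideal $\langle\LM(f)\mid f\in I\setminus\{0\}\rangle$, and for the same reason $\langle\LM(G)\rangle\subseteq J$, since each nonzero $g\in G\subset I$ has $\LM(g)\in J$. Applying the previous proposition to the Gröbner basis $G_0$ gives $h_{R/I}(i)=h_{R/J}(i)$ for all $i$. Since $\langle\LM(G)\rangle$ and $J$ are monomial ideals, in each degree $i$ they cut out nested finite-dimensional subspaces $\langle\LM(G)\rangle_i\subseteq J_i\subseteq R_i$, whence
\[
h_{R/\langle\LM(G)\rangle}(i)=\dim_k R_i-\dim_k\langle\LM(G)\rangle_i\;\ge\;\dim_k R_i-\dim_k J_i=h_{R/I}(i),
\]
with equality in degree $i$ if and only if $\langle\LM(G)\rangle_i=J_i$.

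For part 1, the displayed identity already shows that the two series agree up to degree $d$ if and only if $\langle\LM(G)\rangle_i=J_i$ for every $i\le d$, so it remains to identify this condition with ``$G_{\le d}$ is a Gröbner basis of $I$ up to degree $d$'', i.e.\ with the statement that every nonzero homogeneous $f\in I$ of degree $\le d$ has some $g\in G_{\le d}$ with $\LM(g)\mid\LM(f)$ (restriction to homogeneous $f$ costs nothing: $I$ is homogeneous and grevlex is graded, so one may pass to a graded component). For ($\Leftarrow$): the monomials spanning $J_i$ are precisely the leading monomials of the homogeneous degree-$i$ elements of $I$, so the hypothesis makes each of them divisible by some $\LM(g)$ with $g\in G$, hence $J_i\subseteq\langle\LM(G)\rangle_i$ and equality follows. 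For ($\Rightarrow$): if $\langle\LM(G)\rangle_i=J_i$ for $i\le d$ and $f\in I$ is nonzero homogeneous of degree $i\le d$, then $\LM(f)\in J_i=\langle\LM(G)\rangle_i$, so $\LM(g)\mid\LM(f)$ for some $g\in G$; since grevlex is graded, $\deg g=\deg\LM(g)\le i\le d$, i.e.\ $g\in G_{\le d}$.

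Part 2 then drops out: two power series with coefficients in $\mathbb N$ coincide iff they agree up to degree $d$ for every $d$, and $G_{\le d}$ is a Gröbner basis of $I$ up to degree $d$ for every $d$ iff $G$ is a Gröbner basis of $I$ (every nonzero $f\in I$ is detected at level $d=\deg f$); combining these with part 1 gives the claim. Alternatively one can invoke the previous proposition directly and equate coefficients.

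The one step I expect to require care is the passage from the \emph{numerical} equality of Hilbert functions to the \emph{subspace} equality $\langle\LM(G)\rangle_i=J_i$: this genuinely uses both the inclusion $\langle\LM(G)\rangle\subseteq J$ and the finite-dimensionality of $R_i$, and it is also the place where gradedness of grevlex is silently invoked, so that a divisor $\LM(g)$ of a degree-$i$ leading monomial automatically has $\deg g\le d$ and therefore lies in $G_{\le d}$, not merely in $G$. The remaining arguments are routine manipulations with standard monomials and graded components.
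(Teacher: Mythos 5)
The paper does not prove this proposition; it is cited from [CLO] without argument, so there is no in-paper proof to compare against. Your proposal is correct and is essentially the standard argument behind the cited result (Macaulay's theorem plus the degree-wise comparison of standard monomials): fixing a Gröbner basis $G_0$ to realize the initial ideal $J=\langle \mathrm{LM}(I)\rangle$, using $h_{R/I}=h_{R/J}$, and then converting the numerical equality $h_{R/\langle\mathrm{LM}(G)\rangle}(i)=h_{R/J}(i)$ into the subspace equality $\langle\mathrm{LM}(G)\rangle_i=J_i$ via the inclusion $\langle\mathrm{LM}(G)\rangle\subseteq J$ and finite-dimensionality is exactly the right mechanism. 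The two points that genuinely need care --- that $J_i$ is spanned by leading monomials of \emph{homogeneous} degree-$i$ elements of $I$ (using homogeneity of $I$ and gradedness of grevlex), and that a divisor $\mathrm{LM}(g)$ of a degree-$i$ leading monomial forces $\deg g\le i$ so that $g\in G_{\le d}$ --- are both identified and handled. No gaps.
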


\begin{remark}
Let \( G \subset R \) be a finite set of polynomials and define \( L_G := \mathrm{LM}(G) \). If
\[
h_{R/I}(d) = h_{R/\langle L_G \rangle}(d) \quad \text{for all } d \geq 0,
\]
then \( L_G \) coincides with the set of leading monomials of a Gröbner basis of \( I \). Furthermore, if no strict divisibility relations exist among the elements of \( L_G \), then \( L_G \) is exactly the set of leading monomials of a \emph{minimal} Gröbner basis of \( I \).
\end{remark}
\subsection{Computation of Hilbert Series}

To compute the Hilbert series of quotient rings of the form \( R/J \), where \( J \subset R \) is a monomial ideal, we employ a recursive procedure known as HPS~\cite{CLO}. This subroutine evaluates the series \( H_{R/J}(z) \) by analyzing the minimal generators of \( J \).

Let \( T \) be the minimal set of monomial generators of \( J \). The algorithm proceeds according to the following cases:
\begin{itemize}
    \item \textbf{Case 1:} If \( T = \emptyset \), then \( J = (0) \) and the Hilbert series is
    $
    H_{R/J}(z) = \frac{1}{(1 - z)^n}.
    $
    \item \textbf{Case 2:} If \( T = \{1\} \), then \( J = R \) and the quotient is zero, so
    $
    H_{R/J}(z) = 0.
    $
    \item \textbf{Case 3:} If \( T \) consists of \( p \) variables \( \{x_{i_1}, \dots, x_{i_p}\} \), then
    $
    H_{R/J}(z) = \frac{1}{(1 - z)^{n - p}}.
    $

    \item \textbf{Case 4:} In all other cases, the algorithm selects a monomial \( m \in T \) of degree at least two, chooses a variable \( x_i \) dividing \( m \), and recursively computes:
    $
    H_{R/J}(z) = H_{R/(J + \langle x_i \rangle)}(z) + z \cdot H_{R/(J : x_i)}(z).
    $
\end{itemize}
To understand the recursive step in Case 4, we briefly recall the definition of a \emph{colon ideal}. Given two ideals \( A, B \subset R \), the colon ideal \( A : B \) is defined as
$
A : B := \{ f \in R \mid f B \subset A \}.
$
In particular, for a variable \( x_i \), the colon ideal \( J : x_i \) consists of all polynomials \( f \in R \) such that \( f x_i \in J \). In the context of monomial ideals, this operation corresponds to removing \( x_i \) from the monomials of \( J \) whenever it appears as a factor.

This recursive process terminates when all generators are reduced to variables or constants. The HPS routine is used repeatedly in later sections to evaluate Hilbert series of monomial ideals generated during intermediate steps of the proposed algorithms.

\begin{algorithm}[t]
    \begin{algorithmic}[1]    
    \caption{{\bf HPS}}
    \label{HPS}
        \REQUIRE a monomial ideal $J \subseteq k[x_1,...,x_n]$
        \ENSURE Hilbert series $H_{R/J}(z)$
        \STATE $T \leftarrow $ minimal set of generators for $J$
        \IF{$T=\varnothing$}
            \STATE $H \leftarrow \frac{1}{(1-z)^n}$
        \ELSIF{$T=\{1\}$}
            \STATE $H \leftarrow 0$
        \ELSIF{$T$ consists of $p$ monomials of degree $1$}
            \STATE $H \leftarrow \frac{1}{(1-z)^{n-p}}$
        \ELSE
            \STATE Select $x_i$ appearing in a monomial of degree $> 1$ in $T$
            \STATE $H \leftarrow {\bf HPS}(J+\langle x_i \rangle) + z \cdot {\bf HPS}(J:x_i)$
        \ENDIF
        \RETURN $\{ H \}$
    \end{algorithmic}
\end{algorithm}

\subsection{Generic Sequences}






When polynomial sequences are chosen ``at random'', they are typically expected to exhibit certain generic properties with high probability. Informally, a sequence is said to be \emph{generic} if its coefficients are chosen without specific algebraic dependencies.

Let \( F = \{f_1, \dots, f_m\} \subset R = k[x_1, \dots, x_n] \) be a sequence of non-zero homogeneous polynomials, and let \( I = \langle F \rangle \) be the ideal they generate. It is widely believed—and supported by empirical evidence and theoretical considerations—that generic sequences give rise to highly structured leading monomial ideals. In particular, the following conjecture plays a central role in our discussion.

\begin{definition}[Weakly reverse-lexicographic ideal]
A monomial ideal \( J \subset R \) is called \emph{weakly reverse-lexicographic} if, for every minimal generator \( r \in J \), and for every monomial \( r' \) of the same total degree with \( r' > r \) in graded reverse lexicographic order, we have \( r' \in J \).
\end{definition}

\begin{conjecture}[Moreno–Socías~\cite{Pardue,MS}]
Let \( F \subset R \) be a generic sequence of homogeneous polynomials, and let \( I = \langle F \rangle \). Then the leading-term ideal \( \mathrm{LT}(I) \) is weakly reverse-lexicographic.
\end{conjecture}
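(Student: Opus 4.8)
The plan is to reduce the assertion to a single, well-chosen sequence and then analyse its grevlex initial ideal combinatorially; I note at the outset that this is a long-standing open conjecture, so what follows is the natural line of attack rather than a complete argument. First I would invoke the standard semicontinuity and generic-initial-ideal facts: the set of coefficient vectors for which $\dim_k (R/I)_d$ is minimal in every relevant degree $d$ is a non-empty Zariski-open subset of the coefficient space, and on a non-empty open subset thereof the leading-term ideal $\mathrm{LT}(I)$ is constant (equal to the generic initial ideal). Hence it suffices to exhibit one explicit homogeneous sequence $F$ lying in this open set whose leading-term ideal is weakly reverse-lexicographic and whose quotient $R/\langle F\rangle$ attains the expected Hilbert function in every degree.

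Second, I would build such a sequence incrementally and track, degree by degree, which monomials are forced into $\mathrm{LT}(I)$. The structural claim to establish is that in each degree $d$ the set $\mathrm{LT}(I)_d$ consists of exactly the $\dim_k R_d - h_{R/I}(d)$ largest monomials of $M_d$ under grevlex --- which is precisely the weakly reverse-lexicographic property. This would follow from two ingredients: (i) knowledge of the generic Hilbert function $h_{R/I}(d)$; and (ii) a ``rigidity'' statement that adjoining a new homogeneous generator to an ideal with weakly reverse-lexicographic initial ideal again produces such an ideal, i.e.\ that each new generator removes precisely a grevlex-initial segment of monomials in each degree. Part (ii) I would attack by induction on the number of generators, combining the exact sequence relating $R/\langle f_1,\dots,f_{i-1}\rangle$ and $R/\langle f_1,\dots,f_i\rangle$ through multiplication by $f_i$ with the good behaviour of grevlex under passing to a generic hyperplane section (quotienting by the last variable).

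The main obstacle is ingredient (i): computing $h_{R/I}(d)$ for a generic sequence is the content of Fröberg's conjecture, which is known only in special cases (e.g.\ $n \le 3$, $m \le n+1$, or $m$ large relative to the degrees). Without the Hilbert-function input one cannot certify how many monomials each new generator must kill, so the combinatorial bookkeeping of step two remains undetermined. A further difficulty is that, even granting the expected Hilbert series, one must exclude the possibility that $\mathrm{LT}(I)$ acquires ``extra'' low-degree generators that are not grevlex-initial; since the analogue fails for other term orders, any proof must genuinely exploit features specific to grevlex, most notably its compatibility with generic linear sections. Consequently a complete proof would have to advance in tandem with Fröberg's conjecture, which is why we, following the Hilbert-driven Gröbner basis literature, adopt the statement as a working hypothesis in the sequel.
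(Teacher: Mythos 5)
The statement you were asked about is labelled a \emph{conjecture} in the paper, and the paper supplies no proof of it: the authors explicitly state that they ``proceed under the assumption that the Moreno--Soc\'{\i}as conjecture is true'' and use it only as a working hypothesis. Your decision to flag it as a long-standing open problem and to offer a programme of attack rather than a purported proof is therefore exactly the right call, and it matches the paper's stance. The obstacles you identify (dependence on Fr\"oberg's conjecture for the generic Hilbert function, and the need to exploit grevlex-specific behaviour under generic hyperplane sections) are the genuine ones in the literature.

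One substantive caution about your sketch: in your second step you restate the goal as showing that $\mathrm{LT}(I)_d$ consists of exactly the $\dim_k R_d - h_{R/I}(d)$ largest monomials of $M_d$ in grevlex, i.e.\ that every graded piece is a revlex-initial segment. That is strictly stronger than the weakly reverse-lexicographic property as defined in the paper, which constrains only the \emph{minimal generators}: for a minimal generator $r$ of degree $d$, all degree-$d$ monomials above $r$ must lie in the ideal, but monomials of $\mathrm{LT}(I)_d$ that are multiples of lower-degree generators may well sit below monomials not in the ideal (e.g.\ $J=\langle x_1^2\rangle$ in $k[x_1,x_2,x_3]$ has $x_1^2x_3\in J_3$ while the larger $x_1x_2^2\notin J_3$, yet $J$ is weakly revlex). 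If you pursue the induction in ingredient (ii), the invariant you carry must be the generator-level condition, not the segment condition, or the induction will fail already at the base cases. With that correction, your write-up is an accurate account of why the statement remains a hypothesis rather than a theorem, which is all the paper itself claims.
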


This conjecture has important implications for the Hilbert series of the ideal \( I \). To formalize these implications, we introduce the following truncation operator on formal power series.

\begin{definition}
Let \( h(z) = \sum_{i \ge 0} a_i z^i \) be a formal power series with \( a_i \in \mathbb{Z} \). Define
$
[h(z)] := \sum_{i=0}^{k-1} a_i z^i,
$
where \( k = \min\{i \mid a_i \le 0\} \), and take \( k = \infty \) if all \( a_i > 0 \). In the latter case, we set \( [h(z)] = h(z) \).
\end{definition}

\begin{conjecture}[\cite{Pardue,Froberg}]
Let \( F = \{f_1, \dots, f_m\} \subset R \) be a generic sequence of homogeneous polynomials with degrees \( \deg(f_i) = d_i \). Then the Hilbert series of \( R/\langle F \rangle \) satisfies
\[
H_{R/\langle F \rangle}(z) = \left[ \frac{\prod_{i=1}^m (1 - z^{d_i})}{(1 - z)^n} \right].
\]
\end{conjecture}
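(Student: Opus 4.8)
The assertion is the Fröberg conjecture, in the form recorded by Pardue; it is open in general, so what follows is the line of attack I would pursue, the range it already settles, and the point where it stalls. For formal power series with integer coefficients, write $g(z) \succeq g'(z)$ to mean that every coefficient of $g(z) - g'(z)$ is non-negative. One of the two comparisons is free: for \emph{every} homogeneous sequence of forms of degrees $d_1, \dots, d_m$, generic or not, one has $H_{R/\langle F \rangle}(z) \succeq \bigl[\prod_{i=1}^m (1 - z^{d_i}) / (1-z)^n\bigr]$. This is Fröberg's classical inequality~\cite{Froberg}, so it remains only to prove the reverse coefficientwise comparison for generic $F$.

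\textbf{Reduction to one good sequence.} View a sequence $F$ as a point of the affine space $\bigoplus_{i=1}^m R_{d_i}$. For each $d$, the number $\dim_k \langle F \rangle_d$ is the rank of the multiplication map $\bigoplus_i R_{d-d_i} \to R_d$ sending $(g_i)_i$ to $\sum_i g_i f_i$, whose matrix entries are polynomial in the coefficients of $F$; by lower semicontinuity of rank this number is generically maximal, so $h_{R/\langle F\rangle}(d)$ is generically minimal. Intersecting the finitely many relevant open loci shows that the generic Hilbert function is the coefficientwise minimum over all sequences of degrees $d_i$. Combined with Fröberg's inequality, it therefore suffices to exhibit a \emph{single} sequence $F^\star$ of those degrees with $h_{R/\langle F^\star\rangle}(d) \le \bigl[\prod_i (1-z^{d_i})/(1-z)^n\bigr]_d$ for all $d$.

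\textbf{Constructing $F^\star$.} For $m \le n$ take a regular sequence such as $x_1^{d_1}, \dots, x_m^{d_m}$: the Koszul complex is exact, $H_{R/\langle F^\star\rangle}(z) = \prod_i (1-z^{d_i})/(1-z)^n$ already has non-negative coefficients, and the truncation does nothing. For $m = n+1$ take $x_1^{d_1}, \dots, x_n^{d_n}$ together with $(x_1 + \cdots + x_n)^{d_{n+1}}$: writing $A = R/\langle x_1^{d_1}, \dots, x_n^{d_n}\rangle$, Stanley's Strong Lefschetz Property for monomial complete intersections (in characteristic $0$) forces multiplication $A_{d - d_{n+1}} \to A_d$ by the $d_{n+1}$-th power of the Lefschetz element to have maximal rank, and a direct computation with the polynomial $H_A(z) = \prod_{i \le n}(1-z^{d_i})/(1-z)^n$ then yields exactly the truncated product. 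For general $m$, the plan is to begin from such an $F^\star$ and adjoin generic forms $f_{n+1}, \dots, f_m$ one at a time, needing at the $j$-th step that multiplication by a generic form of degree $d_{j+1}$ on the intermediate algebra $R/\langle f_1, \dots, f_j\rangle$ has maximal rank in every degree — a Weak Lefschetz–type property.

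\textbf{The obstacle, and the route via Moreno–Socías.} The hard part is precisely this maximal-rank property for $j \ge n+1$: no technique is known that reaches it, and it is the open core of the conjecture (unconditionally one obtains only $m \le n+1$, plus low-dimensional cases such as $n \le 3$ by Anick). This is where the companion conjecture stated above enters: by a theorem of Pardue~\cite{Pardue}, the Moreno–Socías conjecture~\cite{MS} — that $\mathrm{LT}(\langle F\rangle)$ is weakly reverse-lexicographic — implies the Fröberg conjecture. Granting it, one finishes by passing to the leading-term ideal, which preserves the Hilbert series, so that $H_{R/\langle F\rangle}(z) = H_{R/\langle \mathrm{LT}(\langle F\rangle)\rangle}(z)$; a weakly reverse-lexicographic ideal is rigid enough that once the degrees $d_i$ are known — they force minimal generators in the top grevlex positions degree by degree — its minimal generators, and hence its Hilbert series, are determined by a recursion on the operations $J + \langle x_i\rangle$ and $J : x_i$ of the kind that drives Algorithm~\ref{HPS}. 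Evaluating that recursion and matching the outcome against $\prod_i (1-z^{d_i})/(1-z)^n$ — using Fröberg's inequality to pin the answer while its coefficients stay positive, and the weakly reverse-lexicographic structure to see it collapses to $0$ afterwards — produces the claimed closed form. Thus the genuine obstruction is the unconditional Weak Lefschetz input for $j \ge n+1$; everything else is a finite, if intricate, combinatorial verification.
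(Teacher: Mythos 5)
This statement is the Fröberg conjecture and is open; the paper offers no proof of it, only the citations to Fröberg and Pardue and the subsequent theorem that the Moreno--Socías conjecture implies the formula. Your proposal correctly treats it as open rather than manufacturing a proof, and the conditional route you settle on --- Pardue's theorem that a weakly reverse-lexicographic leading-term ideal forces the truncated-product Hilbert series --- is exactly the one the paper invokes; your account of the unconditional cases ($m \le n$ via regular sequences and the Koszul complex, $m = n+1$ via Stanley's Lefschetz argument in characteristic $0$, low $n$ via Anick) and of the maximal-rank obstruction for $m > n+1$ is an accurate description of the state of the art.
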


\begin{theorem}[\cite{Pardue}]
If the Moreno–Socías conjecture holds, then the formula in the Hilbert series conjecture is valid for all generic sequences.
\end{theorem}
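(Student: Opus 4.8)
\medskip\noindent\emph{Proof proposal.}
The plan is to argue by induction on the number $m$ of polynomials, using the Moreno--Socías conjecture to force a maximal-rank property for multiplication by a generic form. Set $I_j=\langle f_1,\dots,f_j\rangle$, $I'=I_{m-1}$ and $I=I_m$. Passing from an ideal to its leading-term ideal leaves the Hilbert function unchanged, and any truncation $f_1,\dots,f_j$ of a generic sequence is again generic, so we may assume that $\mathrm{LT}(I_j)$ is weakly reverse-lexicographic for every $j$. The base case $m=0$ is $H_R(z)=1/(1-z)^n$, whose coefficients are all strictly positive, so the bracket acts trivially. For the inductive step, if the generic form $f_m$ of degree $d_m$ already lies in $I'$ (equivalently $h_{R/I'}(d_m)=0$), then $I=I'$ and one checks directly that bracketing the $m$-fold product agrees with bracketing the $(m-1)$-fold one. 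Otherwise multiplication by $f_m$ produces the exact sequence
\[
0\longrightarrow\bigl(R/(I':f_m)\bigr)(-d_m)\xrightarrow{\ \cdot f_m\ }R/I'\longrightarrow R/I\longrightarrow 0,
\]
hence $H_{R/I}(z)=H_{R/I'}(z)-z^{d_m}H_{R/(I':f_m)}(z)$, and the whole problem reduces to the claim that, for generic $f_m$, the map $\cdot f_m\colon(R/I')_{e-d_m}\to(R/I')_e$ has maximal rank for every $e$, i.e.\ $h_{R/I}(e)=\max\!\bigl(0,\ h_{R/I'}(e)-h_{R/I'}(e-d_m)\bigr)$.

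\medskip
Granting the maximal-rank claim, the theorem follows by routine bookkeeping. The sequence $e\mapsto\max\!\bigl(0,h_{R/I'}(e)-h_{R/I'}(e-d_m)\bigr)$ is the coefficient list of $\bigl[(1-z^{d_m})H_{R/I'}(z)\bigr]$ — here one uses that a Hilbert function vanishing in some degree vanishes in every larger degree, which is exactly what turns the pointwise $\max(0,-)$ into the truncation operator. Combining this with the induction hypothesis $H_{R/I'}(z)=\bigl[\prod_{i=1}^{m-1}(1-z^{d_i})/(1-z)^n\bigr]$ and the elementary identity $\bigl[(1-z^{d_m})[g(z)]\bigr]=\bigl[(1-z^{d_m})g(z)\bigr]$, valid for any power series $g$ whose initial segment is positive, yields $H_{R/I}(z)=\bigl[\prod_{i=1}^{m}(1-z^{d_i})/(1-z)^n\bigr]$, completing the induction.

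\medskip
The substance is the maximal-rank claim, and this is where the Moreno--Socías conjecture is indispensable. One inequality is automatic: from the exact sequence $I_e/I'_e$ is a homomorphic image of $(R/I')_{e-d_m}$ under $\cdot f_m$ (and sits inside $(R/I')_e$), so $\dim\mathrm{LT}(I)_e-\dim\mathrm{LT}(I')_e=\dim(I_e/I'_e)\le\min\!\bigl(h_{R/I'}(e),h_{R/I'}(e-d_m)\bigr)$, that is, $h_{R/I}(e)\ge\max(0,\dots)$. For the reverse inequality one would use that both $J:=\mathrm{LT}(I)$ and $J':=\mathrm{LT}(I')$ are weakly reverse-lexicographic. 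Genericity of $f_m$ forces the single new leading monomial in degree $d_m$ to be the grevlex-largest monomial of that degree not in $J'$; then the defining property of a weakly reverse-lexicographic ideal — in each degree its new minimal generators are the grevlex-largest monomials not already covered by the lower-degree generators — lets one compare $J$ with $J'$ degree by degree and show inductively that $\dim J_e$ stays at least as large as the maximal-rank prediction. The crux is the resulting purely combinatorial assertion about how grevlex-initial segments of monomials propagate between consecutive degrees; it is precisely here that the reverse-lexicographic hypothesis is essential — for the lexicographic order the analogous minimal propagation is Macaulay's strictly larger bound, and the formula would be false.

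\medskip
I expect this combinatorial propagation to be the main obstacle: it demands a genuine analysis of how weakly reverse-lexicographic ideals grow from one degree to the next, showing in particular that an ideal matching the maximal-rank prediction through degree $e-1$ cannot fall behind at degree $e$. By contrast, the reduction via the exact sequence, the treatment of the degenerate range where $R/I$ has become Artinian, and the manipulations of the bracket operator $[\,\cdot\,]$ are all routine.
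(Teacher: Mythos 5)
The paper does not actually prove this theorem --- it is quoted from Pardue with a citation and no argument --- so there is no in-paper proof to compare against; I am judging your proposal on its own terms. Your architecture is the right one, and it matches the known proof strategy: induction on $m$, truncations of a generic sequence are generic, the exact sequence $0\to\bigl(R/(I':f_m)\bigr)(-d_m)\to R/I'\to R/I\to 0$, reduction to the statement that $\cdot f_m$ has maximal rank on $R/I'$ in every degree, and then the bracket bookkeeping. The bookkeeping itself is sound: the automatic inequality $h_{R/I}(e)\ge\max\bigl(0,h_{R/I'}(e)-h_{R/I'}(e-d_m)\bigr)$ is correct, the observation that vanishing of a Hilbert function propagates upward converts the pointwise $\max$ into the truncation operator, and the identity $\bigl[(1-z^{d_m})[g(z)]\bigr]=\bigl[(1-z^{d_m})g(z)\bigr]$ does hold for series with positive initial segment.

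The problem is that the one step carrying all the mathematical content --- the reverse inequality, i.e.\ the maximal-rank claim --- is not proved; you explicitly flag it as ``the main obstacle'' and offer only a sketch of how a comparison of the weakly reverse-lexicographic ideals $\mathrm{LT}(I)$ and $\mathrm{LT}(I')$ might go. Everything else in the write-up is routine homological algebra and power-series manipulation that would work for \emph{any} hypothesis implying maximal rank; the specific force of the Moreno--Soc\'{\i}as hypothesis enters only in that missing lemma. In Pardue's actual argument this is handled by an explicit combinatorial lemma about weakly reverse-lexicographic monomial ideals (roughly: multiplication by a power of the smallest variable $x_n$ has maximal rank on $R/J'$ for such $J'$, combined with a semicontinuity argument replacing the generic form $f_m$ by $x_n^{d_m}$ modulo the initial ideal), and none of that analysis appears here. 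As written, the proposal is a correct reduction of the theorem to an unproven claim, not a proof; to complete it you would need to state and prove the combinatorial lemma about how a weakly reverse-lexicographic ideal grows from degree $e-1$ to degree $e$, which is precisely the part you deferred.
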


In the remainder of this paper, we proceed under the assumption that the Moreno–Socías conjecture is true, and hence that the compact Hilbert series formula above applies to every generic sequence.

\section{Our Algorithm for the Leading Monomials of a Minimal Gröbner Basis of a Generic Sequence}\label{sec:LGB}


In this section, we present an algorithm to compute the leading monomials of a minimal Gröbner basis for a generic sequence. In general, obtaining the leading monomials of a Gröbner basis requires executing a complete Gröbner basis algorithm, which is computationally intensive. However, under the assumption that the Moreno–Socías conjecture holds, the structure of a generic sequence allows for an efficient computation of the leading monomials without performing polynomial reductions.

Specifically, if the sequence \( F = \{f_1, \dots, f_m\} \subset R \) is generic, then its leading monomial ideal \( \mathrm{LM}(F) \) is weakly reverse-lexicographic. In this case, the Hilbert series of \( R / \langle F \rangle \) completely determines the number of leading monomials of each degree in a minimal Gröbner basis. Therefore, one can construct the correct leading monomials degree-by-degree by matching the Hilbert function of the currently constructed monomial ideal with that of the true ideal.

We denote this algorithm by LGB, and present it below as Algorithm~\ref{alg:LGB}. 
The design of LGB is inspired by Hilbert-driven algorithm (designed in~\cite{CLO}), which are known for predicting the number and degree of basis elements using Hilbert series. The input is the number of variables \( n \), the number of polynomials \( m \), and the degree list \( \{d_i\}_{i=1}^{m} \) such that \( \deg(f_i) = d_i \). The output is the set \( L_G \) of leading monomials of a minimal Gröbner basis of the generic sequence \( F \).

The algorithm operates as follows. Line 1 computes the Hilbert series \( H_{R/\langle F \rangle}(z) \) using the formula for generic sequences. Line 2 initializes the leading monomials \( L_G \) to the empty set. Line 3 computes the Hilbert series \( H_{R/\langle L_G \rangle}(z) \) of the monomial ideal generated by \( L_G \), using the HPS algorithm described in Section~2.4. Line 4 initializes the degree counter \( d \). The main loop (lines 5–10) iteratively increments \( d \) and updates \( L_G \) until the Hilbert series of the constructed monomial ideal matches that of the original ideal.

In each iteration, the set \( B_d \) of degree-\( d \) monomials not divisible by any element of \( L_G \) is determined (line 6). The difference between \( \#B_d \) and the true Hilbert function value \( h_{R/\langle F \rangle}(d) \) yields the number \( N_d \) of new leading monomials to be added (line 7). The \( N_d \) largest monomials in \( B_d \) with respect to the term order are selected and added to \( L_G \) (line 8). The updated Hilbert series is computed again in line 9. Once the two Hilbert series agree, the loop terminates, and the algorithm returns the complete set of leading monomials.

\begin{algorithm}[t]
    \begin{algorithmic}[1]    
    \caption{{\bf LGB}}
    \label{alg:LGB}
        \REQUIRE $n \in \mathbb{N}$, $m \in \mathbb{N}$, $\{d_i\}_{i=1}^{m} \in \mathbb{N}^m$
        \ENSURE A set of Leading monomials $L_G$ of minimal Gr\"obner basis for a generic sequence $F = \{f_1,...,f_m\}$ with $n$ variables, $m$ polynomials and $\mathrm{deg}(f_i)=d_i$
        \STATE $H_{R/\langle F \rangle}(z) \leftarrow \left[ \frac{\prod^{m}_{i=1}(1-z^{d_i})}{(1-z)^n} \right]$
        \STATE $L_G \leftarrow \varnothing$
        \STATE $H_{R/\langle L_G \rangle}(z) \leftarrow  {\bf HPS}(\langle L_G \rangle)$
        \STATE $d \leftarrow 0$
        \WHILE{$H_{R/\langle F \rangle}(z) \neq H_{R/ \langle L_G \rangle}(z)$}
        \STATE $B_d \leftarrow \{ m \in M_d \mid m \notin \langle L_G \rangle\}$
        \STATE $N_d \leftarrow \#B_d - h_{R/\langle F \rangle}(d)$ 
        \STATE $L_G \leftarrow L_G \cup \{ $the $N_d$ largest elements in term order from $B_d\}$ 
        \STATE $H_{R/\langle L_G \rangle}(z) \leftarrow {\bf HPS}(\langle L_G \rangle)$
        \STATE $d \leftarrow d+1$
        \ENDWHILE
        \RETURN $\{ L_G \}$
    \end{algorithmic}
\end{algorithm}

\begin{theorem}\label{thm:LGB_correctness}
The set \(L_G\) returned by the algorithm {\normalfont LGB} is exactly the set of leading monomials of a minimal Gröbner basis of a generic sequence.
\end{theorem}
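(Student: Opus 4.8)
The plan is to prove two things in tandem: the loop terminates, and at termination $L_G$ equals the unique minimal monomial generating set of $\mathrm{LT}(\langle F\rangle)$, which is exactly the leading-monomial set of a minimal Gr\"obner basis. Throughout, write $I=\langle F\rangle$, $J=\mathrm{LT}(I)$, and let $L^{\ast}$ be the minimal monomial generating set of $J$. By the Moreno--Socías conjecture (assumed), $J$ is weakly reverse-lexicographic. I will freely use two standard facts from \cite{CLO}: the standard monomials form a $k$-basis of $(R/I)_d$ in each degree, so $h_{R/I}(d)=\#M_d-\#(M_d\cap J)$; and $L^{\ast}$ coincides with $\mathrm{LM}(G)$ for any minimal Gr\"obner basis $G$ of $I$, so in particular no element of $L^{\ast}$ divides another.

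Step 1 (a degree-by-degree dichotomy for $J$). For each $d\ge 1$ let $L^{\ast}_{<d}$ denote the generators in $L^{\ast}$ of degree $<d$. I would first show that $M_d\cap J$ is the disjoint union of $M_d\cap\langle L^{\ast}_{<d}\rangle$ and $L^{\ast}\cap M_d$: a degree-$d$ monomial of $J$ is divisible by some generator of degree $\le d$, and if none has degree $<d$ then the dividing generator has degree exactly $d$, hence equals the monomial itself; disjointness is immediate from the no-divisibility property of $L^{\ast}$. Consequently $\#(M_d\cap J)=\#(M_d\cap\langle L^{\ast}_{<d}\rangle)+\#(L^{\ast}\cap M_d)$.

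Step 2 (loop invariant). I claim that after lines 6--9 are executed with counter value $d$, one has $L_G=\{m\in L^{\ast}:\deg m\le d\}$; this is proved by induction on $d$, the base case $d=0$ being immediate ($N_0=0$, $L_G$ stays empty, since a generic sequence has positive degrees). Granting $L_G=L^{\ast}_{<d}$ at the start of the degree-$d$ iteration, line 6 produces $B_d=M_d\setminus\langle L^{\ast}_{<d}\rangle$, so $\#B_d=\#M_d-\#(M_d\cap\langle L^{\ast}_{<d}\rangle)$; combining with $h_{R/I}(d)=\#M_d-\#(M_d\cap J)$ and Step 1 gives $N_d=\#B_d-h_{R/I}(d)=\#(L^{\ast}\cap M_d)$. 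It then remains to identify the $N_d$ grevlex-largest elements of $B_d$ with $L^{\ast}\cap M_d$. First, $L^{\ast}\cap M_d\subseteq B_d$, since a minimal generator of degree $d$ is not divisible by any generator of smaller degree. Second, if some $s\in B_d$ with $s\notin L^{\ast}$ were grevlex-larger than some $r\in L^{\ast}\cap M_d$, then, as $r$ is a minimal generator of the weakly reverse-lexicographic ideal $J$ and $\deg s=\deg r$, the definition would force $s\in J$; but $s\in M_d\cap J$ and $s\notin\langle L^{\ast}_{<d}\rangle$ force $s\in L^{\ast}\cap M_d$ by Step 1, a contradiction. Hence every element of $L^{\ast}\cap M_d$ is grevlex-larger than every element of $B_d\setminus L^{\ast}$, so line 8 selects precisely $L^{\ast}\cap M_d$, and the invariant is restored. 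I expect this identification — turning the weakly-reverse-lexicographic hypothesis into the statement that the greedy top-$N_d$ choice is forced — to be the main obstacle, together with keeping the two counted sets genuinely disjoint in Step 1.

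Step 3 (termination and conclusion). Let $D=\max\{\deg m : m\in L^{\ast}\}$, which is finite because $J$ is finitely generated. By the invariant, after processing degree $d$ the ideals $\langle L_G\rangle$ and $J$ agree in all degrees $\le d$, hence $H_{R/\langle L_G\rangle}(z)$ and $H_{R/I}(z)$ agree up to degree $d$; for $d<D$ the strict inclusion $\langle L_G\rangle\subsetneq J$ yields a degree at which the two Hilbert functions differ, so the loop does not terminate before degree $D$, whereas after processing degree $D$ we have $\langle L_G\rangle=\langle L^{\ast}\rangle=J$, the Hilbert series coincide exactly, and the \textbf{while} loop exits with $L_G=L^{\ast}$. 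Since $L^{\ast}$ is the leading-monomial set of a minimal Gr\"obner basis of $I$ — in agreement with the Remark, as the two Hilbert series now coincide and $L^{\ast}$ has no internal divisibility relations — the theorem follows.
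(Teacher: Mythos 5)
Your proposal is correct and follows essentially the same strategy as the paper's proof: induction on the degree, a Hilbert-function count showing $N_d$ equals the number of degree-$d$ minimal generators, and the weakly reverse-lexicographic property forcing the greedy top-$N_d$ selection to coincide with the true generators. Your Step~1 dichotomy makes the counting argument more explicit than the paper's one-line justification, and your comparison of $s$ and $r$ is stated with the inequality in the right direction (the paper's ``$l'<l$'' is a slip), but these are refinements of the same argument rather than a different route.
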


\begin{proof}
We prove by induction on the degree parameter \(d\) that, at the end of the loop for degree \(d\), the set \(L_G\) coincides with the leading monomials of a minimal Gröbner basis up to degree \(d\).

Assume inductively that, at the beginning of the iteration for degree \(d\), the set \(L_G\) already contains the leading monomials of a minimal Gröbner basis of all degrees strictly less than \(d\).
Let \(B_d\) be the set of degree-\(d\) monomials not divisible by any element of \(L_G\) (line 6 of the algorithm).
Because a monomial not divisible by \(L_G\) represents a non-zero element of the quotient \(R/\langle L_G\rangle\), the cardinality of \(B_d\) exceeds the Hilbert function value \(h_{R/\langle F\rangle}(d)\) by exactly the number of minimal generators of degree \(d\).
Hence \(N_d=\#B_d-h_{R/\langle F\rangle}(d)\) equals the number of leading monomials of degree \(d\).

Denote by \(L_{G,d}\subset M_d\) the true set of leading monomials of degree \(d\) in a minimal Gröbner basis and by \(L_{G,d}'\subset M_d\) the set chosen by the algorithm, that is, the \(N_d\) largest monomials in \(B_d\).
Since \(L_{G,d}\subset B_d\) (otherwise a monomial of \(L_{G,d}\) would be divisible by a generator of lower degree, contradicting minimality) and \(\#L_{G,d}=\#L_{G,d}'=N_d\), it suffices to show \(L_{G,d}=L_{G,d}'\).

Assume to the contrary that \(L_{G,d}\neq L_{G,d}'\).
Then there exist monomials \(l\in L_{G,d}\setminus L_{G,d}'\) and \(l'\in L_{G,d}'\setminus L_{G,d}\).
Because \(L_{G,d}'\) consists of the largest elements of \(B_d\), we have \(l'<l\) in the term order.
The monomial \(l'\) is minimal in the leading-term ideal, so \(l'\) is not divisible by any other minimal generator of the same degree.
Since \(l\) and \(l'\) have the same degree and \(l'\nmid l\), the presence of \(l\) contradicts the weakly reverse-lexicographic property expected for a generic sequence.
Therefore the assumption is impossible, and \(L_{G,d}=L_{G,d}'\).

Consequently, after line 8 the set \(L_G\) contains the correct leading monomials up to degree \(d\).
When the loop ends at some degree \(d'\) with \(H_{R/\langle F\rangle}(z)=H_{R/\langle L_G\rangle}(z)\), no further generators of higher degree can exist; otherwise the Hilbert series would differ at that degree.
Thus, on exit the set \(L_G\) is precisely the set of leading monomials of a minimal Gröbner basis of the generic sequence. \qed
\end{proof}

\begin{theorem}\label{thm:terminate}
The algorithm \textnormal{LGB} terminates after finite iterations.
\end{theorem}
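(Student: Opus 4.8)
The plan is to control the degree counter $d$, which increases by exactly one on each pass through the while-loop (line 10), and to show it cannot grow without bound: the loop guard on line 5 is forced to become false once $d$ exceeds the largest degree occurring in a minimal Gröbner basis of $\langle F\rangle$. The only structural input needed beyond Theorem~\ref{thm:LGB_correctness} is the finiteness of that basis, which I would record first. Writing $I:=\langle F\rangle$, the leading-term ideal $\mathrm{LT}(I)$ is a monomial ideal, hence finitely generated (Dickson's lemma, equivalently Noetherianity of $R$), and the leading monomials of a minimal Gröbner basis of $I$ are precisely the minimal monomial generators of $\mathrm{LT}(I)$. So if $G$ denotes a minimal Gröbner basis of $I$, then $G$ is finite and $D:=\max_{g\in G}\deg(g)<\infty$.

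Next I would invoke the inductive invariant established in the proof of Theorem~\ref{thm:LGB_correctness}: at the end of the iteration whose degree counter equals $d$ (after line 9, before line 10 resets the counter to $d+1$), the set $L_G$ equals $\mathrm{LM}(G_{\le d})$, the leading monomials of a minimal Gröbner basis up to degree $d$; in particular $N_d\ge 0$ at every step, so line 8 is meaningful and each body line does only finitely much work — for line 9 this relies on the termination of the recursive HPS subroutine, which is standard (each Case-4 call passes to strictly smaller monomial data). Since the counter starts at $0$ and climbs by one per iteration, after at most $D+1$ iterations it attains a value $d\ge D$, and then $L_G=\mathrm{LM}(G_{\le d})=\mathrm{LM}(G)$, whence $\langle L_G\rangle=\mathrm{LT}(I)$. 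By the proposition relating the Hilbert series of $R/I$ and $R/\langle\mathrm{LM}(G)\rangle$ (part~2), $H_{R/\langle L_G\rangle}(z)=H_{R/I}(z)=H_{R/\langle F\rangle}(z)$, so the guard on line 5 evaluates to false and the algorithm halts. Hence LGB executes at most $D+1$ iterations of its main loop.

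The crux of the argument, and the point I expect to require the most care to state precisely, is the finiteness bound $D<\infty$: it is what prevents the degree sweep from running forever, and it rests on the Noetherian/Dickson property of monomial ideals applied to $\mathrm{LT}(I)$. Everything else is bookkeeping on the loop counter together with a reference to the invariant already proved for Theorem~\ref{thm:LGB_correctness}. A secondary, minor point worth one sentence is that each individual iteration terminates, which reduces to the (well-known) termination of the HPS recursion called on line 9; I would cite this rather than reprove it.
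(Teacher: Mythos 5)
Your proposal is correct and follows essentially the same route as the paper's proof: both rest on Dickson's lemma to bound the maximal degree $D$ of the minimal generators of $\mathrm{LT}(\langle F\rangle)$, then use the degree-by-degree construction (i.e., the correctness invariant) to conclude that once $d\ge D$ the Hilbert series coincide and the loop guard fails. Your additional remarks on the termination of each individual iteration via HPS and the explicit bound of $D+1$ iterations are fine refinements but do not change the argument.
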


\begin{proof}
The goal of LGB is to compute the set \( L_G \) of leading monomials of a minimal Gröbner basis for a generic sequence \( F \subset R = k[x_1, \dots, x_n] \). Equivalently, it aims to determine a minimal generating set of the monomial ideal \( \langle \operatorname{LT}(F) \rangle \).

By Dickson's lemma, every monomial ideal in a polynomial ring over a field is finitely generated. Therefore, the ideal \( \langle \operatorname{LT}(F) \rangle \) admits a finite minimal generating set. Let \( D \) be a positive integer such that all minimal generators of \( \langle \operatorname{LT}(F) \rangle \) have degree at most \( D \).

In Algorithm~\ref{alg:LGB}, the monomials in \( L_G \) are constructed incrementally by degree. Once the degree counter \( d \) reaches \( D \), all generators of \( \langle \operatorname{LT}(F) \rangle \) have been recovered, and hence
$
\langle L_G \rangle = \langle \operatorname{LT}(F) \rangle.
$
Consequently, the Hilbert series of the two ideals coincide:
$
H_{R/\langle L_G \rangle}(z) = H_{R/\langle F \rangle}(z),
$
and the while-loop condition in Algorithm~\ref{alg:LGB} is no longer satisfied.

Therefore, the algorithm terminates after finite iterations. \qed
\end{proof}

\subsection{Toy Example}

In this section, we demonstrate a concrete example of computing the leading monomials of a minimal Gr\"obner basis of a generic sequence $F$ using the LGB algorithm. Consider the case with $n = 3$, $m = 4$, and degrees $\{d_i\}_{i=1}^{4} = \{2, 2, 3, 4\}$.

Assuming the conjecture on the Hilbert series of generic sequences is valid, the Hilbert series of $R/\langle F \rangle$ is given by
\[
H_{R/\langle F \rangle}(z) = \left[ \frac{(1 - z^2)^2 (1 - z^3)(1 - z^4)}{(1 - z)^3} \right] = 3z^3 + 4z^2 + 3z + 1.
\]

\begin{itemize}
\item[$\bullet$] Degree $d = 0$:\\
We have $M_0 = \{1\}$ and $L_G = \emptyset$, so $B_0 = \{1\}$. Thus, $N_0 = \#B_0 - h_{R/\langle F \rangle}(0) = 1 - 1 = 0$, and no monomials are added to $L_G$.

\item[$\bullet$] Degree $d = 1$:\\
We have $M_1 = \{x_1, x_2, x_3\}$ and $L_G = \emptyset$, so $B_1 = \{x_1, x_2, x_3\}$. Hence, $N_1 = \#B_1 - h_{R/\langle F \rangle}(1) = 3 - 3 = 0$, and again no monomials are added.

\item[$\bullet$] Degree $d = 2$:\\
We have $M_2 = \{x_1^2, x_1x_2, x_2^2, x_1x_3, x_2x_3, x_3^2\}$ and $L_G = \emptyset$, so $B_2 = M_2$. Thus, $N_2 = 6 - 4 = 2$. The two largest monomials in $B_2$ with respect to the term order are $\{x_1^2, x_1x_2\}$. Therefore, $L_G = \{x_1^2, x_1x_2\}$.

\item[$\bullet$] Degree $d = 3$:\\
Using the current $L_G$, we compute the Hilbert series $H_{R/\langle L_G \rangle}(z)$, which yields $1 + 3z + 4z^2 + 5z^3 + \cdots$. Since this differs from $H_{R/\langle F \rangle}(z)$, we proceed. The remaining degree-3 monomials not divisible by $L_G$ are $B_3 = \{x_2^3, x_2^2x_3, x_1x_3^2, x_2x_3^2, x_3^3\}$. Thus, $N_3 = 5 - 3 = 2$, and we select the two largest monomials: $\{x_2^3, x_2^2x_3\}$. Hence, $L_G = \{x_1^2, x_1x_2, x_2^3, x_2^2x_3\}$.

\item[$\bullet$] Degree $d = 4$:\\
Now, compute $H_{R/\langle L_G \rangle}(z) = 1 + 3z + 4z^2 + 3z^3 + 3z^4 + \cdots$. As this is not equal to $H_{R/\langle F \rangle}(z)$, we proceed. Let $B_4 = \{x_1x_3^3, x_2x_3^3, x_3^4\}$, then $N_4 = 3 - 0 = 3$, and the three largest monomials are $\{x_1x_3^3, x_2x_3^3, x_3^4\}$. Thus, we obtain
\[
L_G = \{x_1^2, x_1x_2, x_2^3, x_2^2x_3, x_1x_3^3, x_2x_3^3, x_3^4\}.
\]
\end{itemize}

Finally, compute $H_{R/\langle L_G \rangle}(z) = 3z^3 + 4z^2 + 3z + 1$, which coincides with $H_{R/\langle F \rangle}(z)$, concluding the computation. Therefore, the above $L_G$ is the set of leading monomials of a minimal Gr\"obner basis for the generic sequence $F$.

\section{Improvements to LGB by Modifying the Loop Termination Condition}\label{sec:d}

In this section, we propose an enhancement of the LGB algorithm by modifying its loop termination condition. Let $F = \{f_1, \ldots, f_m\}$ be a generic sequence with $n$ variables and $m$ polynomials such that $\deg(f_i) = d_i$.

\subsection{Modification of the Loop Termination Condition When $n \leq m$}

In Algorithm~2 (LGB), the Hilbert series $H_{R/\langle L_G \rangle}(z)$ must be computed using the HPS algorithm at each iteration (line 9). However, as the number of elements in $L_G$ increases, the computation cost of HPS also increases significantly. If possible, this computation should be avoided. To address this, we consider an alternative termination condition that eliminates the need to compute HPS.

When $n \leq m$, the Hilbert series of $F$ is finite. Let $D$ be one plus the degree of $H_{R/\langle F \rangle}(z)$; that is, $D$ is the smallest integer such that
$
\langle F \rangle_d = R_d \text{ for all } d \geq D.
$
This value $D$ is known as the \emph{degree of regularity}, which is widely used in estimating the computational complexity of multivariate polynomial cryptosystems.

By the definition of the Hilbert series, we have $\dim(R/\langle F \rangle)_d = 0$ for all $d \geq D$. Therefore, all elements in $R_D$ are reducible by Gr\"obner basis of $\langle F \rangle$. If there existed an element of a minimal Gr\"obner basis of $\langle F \rangle$ with degree $d > D$, it would imply the existence of a monomial of degree $d$ that is not reducible by degree $\leq D$ monomials, contradicting the minimality of the basis.

Hence, when $n \leq m$, the leading monomials of a minimal Gr\"obner basis of $F$ do not increase beyond degree $D$, and it suffices to compute LGB up to degree $D$.

\subsection{Modification of the Loop Termination Condition When $n > m$}

In the case where $n > m$, unlike the case $n \leq m$, the Hilbert series of $F$ is infinite. Therefore, the same approach as in the previous section cannot be applied. Instead, we can utilize the well-known Macaulay bound as an upper limit for the computation.

The Macaulay bound provides an upper bound on the degree $D$ as follows:
$
D \leq \sum_{i=1}^{m} (d_i - 1) + 1.
$

Thus, when $n > m$, we set the maximum degree $D$ for which the LGB algorithm computes to be the right-hand side of the above inequality. This allows us to establish a guaranteed termination condition for LGB in the case of $n > m$ without computing the Hilbert series at every iteration.

\begin{remark}
The Macaulay bound provides an upper bound on the degree $D$ required to compute all leading monomials of a minimal Gröbner basis when $n > m$. While this bound is not necessarily tight in general, in all our computational experiments, the maximal degree $d$ for which $N_d \neq 0$ was 
$
d = \sum_{i=1}^{m} (d_i - 1) + 1
$. 
This suggests that the Macaulay bound may in practice suffice to compute the complete set of leading monomials in LGB.
\end{remark}
\section{Improvement of LGB by Reducing the Number of Monomials to be Checked for Divisibility by $\langle L_G \rangle$}\label{sec:M}

In the LGB algorithm, one of the most computationally expensive steps is determining whether a given monomial of degree $d$ belongs to the ideal $\langle L_G \rangle$ (line 6 in Algorithm~2). This requires checking all monomials of degree $d$, whose number is
$
\binom{n + d - 1}{d}.
$
As the degree of a minimal Gr\"obner basis of the generic sequence increases, the number of such monomials increases rapidly, leading to a significant rise in computational cost.

In this section, we introduce methods to reduce the number of monomials that must be checked for membership in $\langle L_G \rangle$, thereby improving the efficiency of the algorithm. We present multiple techniques (referred to as {\bf tier 1}, {\bf tier 2}, and {\bf tier 3}) in a step-by-step manner. Each subsequent tier refines the previous one and achieves greater reduction in the number of monomials. Among them, the method described last provides the most significant reduction.

Let $L^{(d)}_G$ denote the set of leading monomials of degree $d$ in a minimal Gr\"obner basis. In the LGB algorithm, this corresponds to the update of $L_G$ performed at line 8 in the iteration for parameter $d$. Let $B_d$ be the set of degree $d$ monomials not divisible by any element in $\langle L^{(d-1)}_G \rangle$, i.e., the monomials of $M_d$ that are not in the ideal generated by leading monomials up to degree $d-1$.

Let $\widetilde{B}_d$ denote the set of monomials in $M_d$ that are not divisible by the updated ideal $\langle L^{(d)}_G \rangle$, and let $\widetilde{M}_d$ be the subset of $M_d$ whose elements need to be tested for membership in $\langle L^{(d-1)}_G \rangle$ when computing $B_d$. The goal of this section is to reduce the size of $\widetilde{M}_d$.

We observe that $\widetilde{B}_d$ can be computed from $B_d$ and $N_d$ (the number of new leading monomials added at degree $d$) as follows:
\[
\widetilde{B}_d \leftarrow B_d \setminus \{\text{the } N_d \text{ largest monomials (in term order) in } B_d\}.
\]
This formula follows from the definitions. $B_d$ consists of monomials in $M_d$ not divisible by $\langle L^{(d-1)}_G \rangle$, and $\widetilde{B}_d$ consists of those not divisible by $\langle L^{(d)}_G \rangle$. The only difference between $L^{(d-1)}_G$ and $L^{(d)}_G$ is the set of newly added monomials of degree $d$ (added in line 8). Hence, the difference between $B_d$ and $\widetilde{B}_d$ lies only in those added monomials. Removing those from $B_d$ yields $\widetilde{B}_d$.

\subsection{tier 1}\label{sub:tier1}

Based on Theorem~4, $B_d$ can be computed as follows:
\begin{align*}
\widetilde{M}_d &\leftarrow \{ x_i b \mid i \in \{1, \ldots, n\},\ b \in \widetilde{B}_{d-1} \}, \\
B_d &\leftarrow \{ m \in \widetilde{M}_d \mid m \notin \langle L^{(d-1)}_G \rangle \}.
\end{align*}

\begin{theorem}
Let $\widetilde{M}_d = \{x_i b \mid i \in \{1, \ldots, n\},\ b \in \widetilde{B}_{d-1}\}$. Then,
$
B_d = \{ m \in \widetilde{M}_d \mid m \notin \langle L^{(d-1)}_G \rangle \}.
$
\end{theorem}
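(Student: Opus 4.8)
The plan is to establish the two inclusions $B_d \subseteq \{m \in \widetilde{M}_d \mid m \notin \langle L_G^{(d-1)}\rangle\}$ and $\{m \in \widetilde{M}_d \mid m \notin \langle L_G^{(d-1)}\rangle\} \subseteq B_d$ separately, where recall $B_d = \{m \in M_d \mid m \notin \langle L_G^{(d-1)}\rangle\}$. The reverse inclusion is immediate: any $m \in \widetilde{M}_d$ with $m \notin \langle L_G^{(d-1)}\rangle$ lies in $M_d$ (since $\widetilde{M}_d \subseteq M_d$ by construction, each element being a variable times a degree-$(d-1)$ monomial) and satisfies the defining condition of $B_d$, so it belongs to $B_d$.

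The substantive direction is $B_d \subseteq \widetilde{M}_d$ after removing multiples of $L_G^{(d-1)}$; since elements of $B_d$ are by definition not in $\langle L_G^{(d-1)}\rangle$, what must be shown is simply $B_d \subseteq \widetilde{M}_d$. First I would take an arbitrary $m \in B_d$, pick any variable $x_i$ dividing $m$ (possible since $d \geq 1$; the degree-$0$ and the trivial cases can be handled separately or absorbed, as $\widetilde{B}_0 = \{1\}$), and write $m = x_i b$ with $b \in M_{d-1}$. The key claim is that $b \in \widetilde{B}_{d-1}$, i.e., $b \notin \langle L_G^{(d)}\rangle = \langle L_G^{(d-1)}\rangle$ together with the newly added degree-$d$ monomials. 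Since $\deg(b) = d-1 < d$, $b$ cannot be divisible by any degree-$d$ generator, so it suffices to show $b \notin \langle L_G^{(d-1)}\rangle$. This follows because $\langle L_G^{(d-1)}\rangle$ is a monomial ideal: if $b$ were divisible by some generator $g \in L_G^{(d-1)}$, then $m = x_i b$ would also be divisible by $g$, contradicting $m \in B_d$. Hence $b \in \widetilde{B}_{d-1}$ and therefore $m = x_i b \in \widetilde{M}_d$.

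I expect the main obstacle to be purely bookkeeping rather than conceptual: one must be careful about the base case $d = 0$ (or the first degree at which $B_d$ is nonempty) and about the identification $\langle L_G^{(d)}\rangle$ versus $\langle L_G^{(d-1)}\rangle$ plus new generators, which relies on the update rule at line 8 of Algorithm~\ref{alg:LGB}. The only genuinely needed fact is the divisibility-closedness of monomial ideals, namely that $u \mid v$ and $u \in \langle L_G^{(d-1)}\rangle$ with $u$ a monomial generator implies $v \in \langle L_G^{(d-1)}\rangle$, combined with the contrapositive used above. Once these are laid out, the double inclusion closes the proof. \qed
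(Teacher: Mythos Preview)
Your proof is correct and takes essentially the same approach as the paper: both rest on the single observation that if a degree-$(d-1)$ divisor $b$ of $m$ lies in the monomial ideal $\langle L_G^{(d-1)}\rangle$ then so does $m = x_i b$, which you package as a double inclusion while the paper packages it as filtering the decomposition $M_d = M_1\cdot M_{d-1}$ by discarding products whose second factor is already in the ideal. One minor cleanup: by the paper's convention $\widetilde{B}_{d-1}$ is defined relative to $\langle L_G^{(d-1)}\rangle$, so your detour through $\langle L_G^{(d)}\rangle$ is unnecessary (though harmless, since as you note a degree-$(d-1)$ monomial cannot be divisible by a degree-$d$ generator).
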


\begin{proof}
By definition,
$
B_d = \{ m \in M_d \mid m \notin \langle L^{(d-1)}_G \rangle \}.
$
Note that $M_d$ can be written as
$
M_d = \{ r' \cdot r \mid r' \in M_1,\ r \in M_{d-1} \}.
$
If $r \in \langle L^{(d-1)}_G \rangle$, then for all $r' \in M_1$, we have $r' r \in \langle L^{(d-1)}_G \rangle$, and thus $r' r \notin B_d$. Therefore, when constructing $M_d$, any product involving an $r$ in $\langle L^{(d-1)}_G \rangle$ will be excluded from $B_d$.

Since $M_{d-1} \setminus \langle L^{(d-1)}_G \rangle = \widetilde{B}_{d-1}$, it suffices to test the monomials
$
\widetilde{M}_d = \{ x_i b \mid i \in \{1, \ldots, n\},\ b \in \widetilde{B}_{d-1} \}
$
for membership in $\langle L^{(d-1)}_G \rangle$ in order to compute $B_d. \qed$
\end{proof}

\subsection{tier 2}\label{sub:tier2}

According to Theorem~5, the computation of $B_d$ proceeds as follows:
\begin{align*}
b &\leftarrow \text{the largest monomial in term order from } \widetilde{B}_{d-1}, \\
x_t &\leftarrow \text{the smallest variable in } b, \\
\widetilde{M}_d &\leftarrow \{ x_i b \mid i \in \{t, \ldots, n\},\ b \in \widetilde{B}_{d-1} \}, \\
B_d &\leftarrow \{ m \in \widetilde{M}_d \mid m \notin \langle L^{(d-1)}_G \rangle \}.
\end{align*}
This method of computing $B_d$ is referred to as \textbf{tier 2}.

\begin{theorem}
Let $r$ be the largest monomial in $\widetilde{B}_{d-1}$ with respect to the term order, and let $x_t$ be the smallest variable appearing in $r$. Define
\[
\widetilde{M}_d = \{ x_i b \mid i \in \{t, \ldots, n\},\ b \in \widetilde{B}_{d-1} \}.
\]
Then,
\[
B_d = \{ m \in \widetilde{M}_d \mid m \notin \langle L^{(d-1)}_G \rangle \}.
\]
\end{theorem}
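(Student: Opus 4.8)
The plan is to reduce Theorem~7 to Theorem~5, whose conclusion already tells us that $B_d$ can be recovered by testing only the monomials $\{x_i b \mid i\in\{1,\dots,n\},\ b\in\widetilde B_{d-1}\}$. It therefore suffices to show that dropping the variables $x_1,\dots,x_{t-1}$ from the multipliers costs us nothing: every monomial of $B_d$ that can be written as $x_i b$ with $i<t$ and $b\in\widetilde B_{d-1}$ can also be written as $x_j b'$ with $j\ge t$ and $b'\in\widetilde B_{d-1}$. Once that is established, the set on the right-hand side of the asserted equality is unchanged, and the result follows.

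First I would make precise what $x_t$ is: $r$ is the largest monomial of $\widetilde B_{d-1}$ in grevlex, and $x_t$ is the smallest-index variable dividing $r$. The key structural fact I would prove is that $\widetilde B_{d-1}$ is closed under the grevlex order within degree $d-1$ in the following weak sense — if $b\in\widetilde B_{d-1}$ and $b'\in M_{d-1}$ with $b'>b$, then $b'\in\widetilde B_{d-1}$. This is exactly the weakly reverse-lexicographic property of $\langle L_G^{(d-1)}\rangle$: a monomial of degree $d-1$ fails to lie in $\widetilde B_{d-1}$ precisely when it is divisible by some minimal generator of $\langle L_G^{(d-1)}\rangle$ of degree $\le d-1$, and the weakly reverse-lexicographic hypothesis forces the set of such "bad" monomials of degree $d-1$ to be an order ideal from the top. (For generators of degree exactly $d-1$ this is the definition; for lower-degree generators it is automatic, since divisibility by a fixed monomial is itself grevlex-monotone among monomials of a fixed degree — the larger multiple of a fixed monomial is again a multiple.)

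Now suppose $m\in B_d$ and $m=x_i b$ with $i<t$ and $b\in\widetilde B_{d-1}$. Since $m\notin\langle L_G^{(d-1)}\rangle$ and in particular $m$ is not divisible by any degree-$\le d-1$ generator, the monomial $m$ lies above $r x_{?}$-type thresholds; more concretely I would argue that $m/x_n$ — or, if $x_n\nmid m$, the monomial obtained by lowering the smallest variable of $m$ to the next available slot — is a monomial $b'\in M_{d-1}$ with $b'\ge b$ in grevlex, hence $b'\in\widetilde B_{d-1}$ by the order-ideal property, and $m=x_j b'$ for some $j$ with $x_j$ no smaller than $x_t$. The cleanest version of this step is: among all factorizations $m=x_j b'$ with $b'\in M_{d-1}$, choose the one with $j$ maximal; then $x_j$ is the smallest variable of $m$, so $j\ge t$ would follow once we know the smallest variable of any $m\in B_d$ has index $\ge t$. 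That last claim is where the grevlex geometry really enters: if $m\in B_d$ had a variable of index $<t$ as its smallest variable, then $m<r x_?$ in a way that, combined with $m\notin\langle L_G^{(d-1)}\rangle$ and the weakly reverse-lexicographic structure, would produce a degree-$(d-1)$ monomial larger than $r$ yet still outside $\widetilde B_{d-1}$, contradicting maximality of $r$.

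The main obstacle I anticipate is precisely this last comparison: relating the grevlex position of $m\in M_d$ to that of $r\in M_{d-1}$ and of the multipliers, since grevlex compares by "smallest differing index from the right," and multiplying by a variable shifts the exponent vector in a way that is not order-preserving in an obvious coordinatewise sense. I would handle it by working directly with exponent vectors and the definition of grevlex, isolating the rightmost coordinate where things differ, and using the fact that $r$ being the grevlex-maximum of $\widetilde B_{d-1}$ pins down $x_t$ as a sharp threshold. Everything else — the reduction to Theorem~5, the order-ideal property of $\widetilde B_{d-1}$, and the bookkeeping of which $x_i b$ survive — is routine once that comparison is in place.
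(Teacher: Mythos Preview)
Your overall plan---reduce to the previous tier and show that any $x_q b'$ with $q<t$, $b'\in\widetilde B_{d-1}$ that survives the ideal test already appears as some $x_p b''$ with $p\ge t$, $b''\in\widetilde B_{d-1}$---is exactly the paper's strategy. The problem is the ``key structural fact'' you build on. You assert that $\widetilde B_{d-1}$ is upward-closed in grevlex (if $b\in\widetilde B_{d-1}$ and $b'>b$ then $b'\in\widetilde B_{d-1}$), and you then use this direction explicitly (``$b'\ge b$ \dots\ hence $b'\in\widetilde B_{d-1}$ by the order-ideal property''). This is backwards: $\widetilde B_{d-1}$ is the complement of the ideal in degree $d-1$, and for a weakly reverse-lexicographic ideal the monomials \emph{in} the ideal sit at the top, so $\widetilde B_{d-1}$ is, if anything, downward-closed. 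In the paper's toy example $\widetilde B_2=\{x_2^2,x_1x_3,x_2x_3,x_3^2\}$ consists of the four \emph{smallest} degree-$2$ monomials. So the step where you pass from $b'\ge b$ to $b'\in\widetilde B_{d-1}$ is invalid, and the argument as written collapses.

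No order-ideal property is needed. The paper's proof is a two-line divisibility trick. Since $r$ is the grevlex maximum of $\widetilde B_{d-1}$ and its smallest variable is $x_t$, every $b'\in\widetilde B_{d-1}$ satisfies $b'\le r$; comparing exponent vectors at position $t$ shows that $b'$ must contain some variable $x_p$ with $p\ge t$. Write $b'=x_p b''$. If $x_q b'\notin\langle L_G^{(d-1)}\rangle$ with $q<t$, then its divisor $x_q b''$ is also outside the ideal and has degree $d-1$, so $x_q b''\in\widetilde B_{d-1}$; hence $x_q b'=x_p(x_q b'')$ already lies in $\{x_i b\mid i\ge t,\ b\in\widetilde B_{d-1}\}$. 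That is the entire argument---no comparison between monomials of different degrees, no closure property of $\widetilde B_{d-1}$. Your ``cleanest version'' (factor out the smallest variable of $m$ itself) would also succeed and is a minor variant of this, but you left its crucial step (``the smallest variable of any $m\in B_d$ has index $\ge t$'') to the very grevlex comparison you flagged as the main obstacle, and your sketched route to it again leans on the incorrect order-ideal claim.
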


\begin{proof}
By Theorem~4, it is sufficient to consider the set
$
\{ x_i b \mid i \in \{1, \ldots, n\},\ b \in \widetilde{B}_{d-1} \}
$
when computing $B_d$.

Clearly, this set can be partitioned as:
$
\{ x_i b \mid i \in \{1, \ldots, t-1\},\ b \in \widetilde{B}_{d-1} \} \cup \{ x_i b \mid i \in \{t, \ldots, n\},\ b \in \widetilde{B}_{d-1} \}.
$

We claim that every monomial in the first set is either in $\langle L^{(d-1)}_G \rangle$ or also appears in the second set. Let $b' \in \widetilde{B}_{d-1}$. Since $r \geq b'$ and $\deg(r) = \deg(b')$, by the definition of graded reverse lexicographic order, $b'$ must contain at least one variable $x_p$ such that $p \in \{t, \ldots, n\}$. Then $b' = x_p b''$ for some monomial $b''$.

Now, let $x_q \in \{x_i \mid i \in \{1, \ldots, t-1\}\}$, and suppose $x_q b' \notin \langle L^{(d-1)}_G \rangle$. Then,
$
x_q b' = x_q x_p b'' \notin \langle L^{(d-1)}_G \rangle,
$
implying $x_q b'' \notin \langle L^{(d-1)}_G \rangle$, i.e., $x_q b'' \in \widetilde{B}_{d-1}$. Therefore, $x_p x_q b'' \in \{ x_i b \mid i \in \{t, \ldots, n\},\ b \in \widetilde{B}_{d-1} \}$, showing that $x_q b'$ is already included or reducible via a term in the second set.

Hence, it suffices to consider only the set
\[
\widetilde{M}_d = \{ x_i b \mid i \in \{t, \ldots, n\},\ b \in \widetilde{B}_{d-1} \}
\]
for computing $B_d$. \qed
\end{proof}

\subsection{tier 3}\label{sub:tier3}

According to Theorem~6, $B_d$ can be computed as follows:
\begin{align*}
b &\leftarrow \text{the largest monomial in term order from } \widetilde{B}_{d-1}, \\
x_t &\leftarrow \text{the smallest variable in } b, \\
\widetilde{M}_d &\leftarrow \{ x_t b \mid b \in \widetilde{B}_{d-1} \}, \\
B_d &\leftarrow \{ m \in \widetilde{M}_d \mid m \notin \langle L^{(d-1)}_G \rangle \} \cup \{ x_i b \mid i \in \{t+1, \ldots, n\},\ b \in \widetilde{B}_{d-1} \}.
\end{align*}
We refer to this method of computing $B_d$ as \textbf{tier 3}.

\begin{theorem}
Let $r$ be the largest monomial in $\widetilde{B}_{d-1}$ with respect to the term order, and let $x_t$ be the smallest variable appearing in $r$. Define
\[
\widetilde{M}_d = \{ x_t b \mid b \in \widetilde{B}_{d-1} \}.
\]
Then,
\[
B_d = \{ m \in \widetilde{M}_d \mid m \notin \langle L^{(d-1)}_G \rangle \} \cup \{ x_i b \mid i \in \{t+1, \ldots, n\},\ b \in \widetilde{B}_{d-1} \}.
\]
\end{theorem}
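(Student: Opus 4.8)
The plan is to derive Theorem~7 from Theorem~6 (tier~2) by a short set‑theoretic reduction, and then to prove the one genuinely new fact: the monomials $x_ib$ with $t+1\le i\le n$ and $b\in\widetilde{B}_{d-1}$ all lie outside $\langle L^{(d-1)}_G\rangle$, so they may be thrown into $B_d$ with no divisibility test at all.

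First I would record the reduction. Write $\widetilde{M}^{(2)}_d:=\{x_ib\mid i\in\{t,\dots,n\},\ b\in\widetilde{B}_{d-1}\}$ for the set occurring in Theorem~6 and $S:=\{x_ib\mid i\in\{t+1,\dots,n\},\ b\in\widetilde{B}_{d-1}\}$, so that $\widetilde{M}^{(2)}_d=\widetilde{M}_d\cup S$ with $\widetilde{M}_d=\{x_tb\mid b\in\widetilde{B}_{d-1}\}$. Theorem~6 gives $B_d=\{m\in\widetilde{M}^{(2)}_d\mid m\notin\langle L^{(d-1)}_G\rangle\}$, hence
\[
B_d=\{m\in\widetilde{M}_d\mid m\notin\langle L^{(d-1)}_G\rangle\}\ \cup\ \{m\in S\mid m\notin\langle L^{(d-1)}_G\rangle\}.
\]
A one‑line set computation then shows that this equals the right‑hand side of Theorem~7 \emph{if and only if} $S\cap\langle L^{(d-1)}_G\rangle=\varnothing$, i.e. $x_ib\notin\langle L^{(d-1)}_G\rangle$ for all $i\in\{t+1,\dots,n\}$ and $b\in\widetilde{B}_{d-1}$. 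I would verify this equivalence first and then devote the rest of the proof to this claim.

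To prove $S\cap\langle L^{(d-1)}_G\rangle=\varnothing$, fix $i\ge t+1$ and $b\in\widetilde{B}_{d-1}$ and assume, for contradiction, that $x_ib\in\langle L^{(d-1)}_G\rangle$. The ideal $\langle L^{(d-1)}_G\rangle$ is generated by monomials of degree at most $d-1$ (those adjoined in iterations $0,\dots,d-1$), so some generator $g$ divides $x_ib$ with $\deg g\le d-1<d=\deg(x_ib)$. As $b=(x_ib)/x_i\notin\langle L^{(d-1)}_G\rangle$ we have $g\nmid b$, and comparing exponent vectors, $g\mid x_ib$ together with $g\nmid b$ forces $x_i\mid g$ and $(g/x_i)\mid b$. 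Write $b=(g/x_i)\,w$ with $\deg w=d-\deg g\ge 1$, choose a variable $x_k\mid w$, and set $c:=b/x_k$ (degree $d-2$). Then $x_ic=x_i(g/x_i)(w/x_k)=g\cdot(w/x_k)$, so $g\mid x_ic$ and therefore $x_ic\in\langle L^{(d-1)}_G\rangle$; also $c\mid b$, so $c\notin\langle L^{(d-1)}_G\rangle$. Now $x_ic$ has degree $1+(d-2)=d-1=\deg r$, where $r$ is the largest element of $\widetilde{B}_{d-1}$; since $i\ge t+1$, the variable $x_i$ appears in $x_ic$ with positive exponent, whereas $r$—whose smallest variable is $x_t$—has exponent $0$ in every variable of index $>t$. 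Reading exponent vectors from the right, the last index at which $x_ic$ and $r$ differ is some $p\ge i>t$, at which $x_ic$ has positive exponent while $r$ has exponent $0$; by the definition of grevlex this yields $x_ic<r$ (equivalently: a degree‑$(d-1)$ monomial involving a variable of index $>t$ is automatically smaller than $r$).

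The final step is to conclude from $x_ic<r$, $\deg(x_ic)=d-1$ and $r\in\widetilde{B}_{d-1}$ that $x_ic\in\widetilde{B}_{d-1}$, i.e. $x_ic\notin\langle L^{(d-1)}_G\rangle$—contradicting the previous paragraph and finishing the proof. This rests on the structural fact that, throughout the execution of LGB, $\widetilde{B}_{d-1}=M_{d-1}\setminus\langle L^{(d-1)}_G\rangle$ is a grevlex order ideal (downset) among degree‑$(d-1)$ monomials—equivalently, the degree‑$(d-1)$ component of $\langle L^{(d-1)}_G\rangle$ is an initial segment of grevlex. I would prove this by induction on $d$ in tandem with Theorem~1: at the start of iteration $d$, $\langle L^{(d-1)}_G\rangle$ coincides with the weakly reverse‑lexicographic ideal $\langle\operatorname{LT}(F)\rangle$ in all degrees $\le d-1$; the weak‑reverse‑lex property forces the graded pieces of $\langle\operatorname{LT}(F)\rangle$ \emph{below its top generator degree} to be initial segments (a degree‑$e$ monomial in the ideal that is not a minimal generator is a multiple of a lower‑degree minimal generator, and the smallest new minimal generator of each degree drags in every larger monomial of that degree), while the rule of always adjoining the \emph{largest} monomials of $B_e$ is exactly what propagates the segment structure from $e-1$ to $e$. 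I expect this inductive bookkeeping to be the main obstacle, because the segment property can genuinely fail at and above the last generator degree—already $\langle x_1^2\rangle\subset k[x_1,x_2,x_3]$ is not a grevlex segment in degree $3$—so the argument must use precisely that the while‑loop has not yet terminated (so $\langle\operatorname{LT}(F)\rangle$ still has generators of degree $\ge d$) in order to exclude such holes in the relevant range; once that is secured, everything else is the elementary divisibility‑and‑order manipulation above.
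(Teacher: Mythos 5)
Your reduction to the single claim $S\cap\langle L^{(d-1)}_G\rangle=\varnothing$, and the deduction that any minimal generator $g$ dividing $x_ib$ but not $b$ must satisfy $x_i\mid g$, both match the paper's proof and are correct. The problem is the final step. You correctly anticipate that the ``main obstacle'' is the structural lemma that $\widetilde{B}_{d-1}$ is a grevlex down-set (equivalently, that the degree-$(d-1)$ piece of $\langle L^{(d-1)}_G\rangle$ is a grevlex initial segment), but this lemma is false even in the range where you need it, i.e.\ strictly below the top generator degree and while the loop is still running. Take $n=3$ and a generic sequence of degrees $(2,4)$. Then $H_{R/\langle F\rangle}=1+3z+5z^2+7z^3+8z^4+\cdots$, the algorithm produces the minimal generators $x_1^2$ in degree $2$ and $x_1x_2^3$ in degree $4$ (with further generators in degrees $\ge 5$, so the loop has not terminated). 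Already in degree $3$ one has $x_1^2x_3\in\langle x_1^2\rangle$ while the grevlex-larger monomial $x_2^3$ is not in the ideal; in degree $4$ one has $x_1^2x_3^2\in\langle L^{(4)}_G\rangle$ while $x_2^4\notin\langle L^{(4)}_G\rangle$ and $x_2^4>x_1^2x_3^2$. So neither graded piece is an initial segment, and your proposed induction cannot go through: the weakly reverse-lexicographic property only says that monomials of degree $\deg g$ \emph{larger than a minimal generator} $g$ lie in the ideal; it says nothing about multiples of lower-degree generators being $\ge$ the smallest new generator of that degree, which is exactly what fails above. The same example also kills the weaker statement your contradiction actually uses (``no degree-$(d-1)$ monomial below $r$ and involving a variable of index $>t$ lies in the ideal''): at $d-1=3$ one has $r=x_1x_2^2$, $t=2$, yet $x_1^2x_3<r$ lies in $\langle x_1^2\rangle$.

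The repair is to not pass through $c$ and a global order-ideal property at all, but to apply the weak reverse-lexicographic hypothesis directly to the generator $g$ you have already produced. Since $x_i\mid g$ with $i\ge t+1$, every monomial of degree $\deg g$ supported on $x_1,\dots,x_{i-1}$ is strictly larger than $g$ in grevlex (its rightmost differing exponent against $g$ occurs at an index $\ge i$ where it vanishes and $g$ does not), hence lies in $\langle \mathrm{LT}(F)\rangle$ and therefore in $\langle L^{(d-1)}_G\rangle$ because $\deg g\le d-1$. But $r$ is supported on $x_1,\dots,x_t\subseteq\{x_1,\dots,x_{i-1}\}$ and $\deg r=d-1\ge\deg g$, so $r$ is divisible by some degree-$\deg(g)$ monomial supported on $x_1,\dots,x_t$, forcing $r\in\langle L^{(d-1)}_G\rangle$ and contradicting $r\in\widetilde{B}_{d-1}$. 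This is the paper's argument; your divisibility bookkeeping up to the point of establishing $x_i\mid g$ can be kept verbatim, but everything from the construction of $c$ onward should be replaced.
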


\begin{proof}
By Theorem~5, the set of monomials that need to be tested for membership in $\langle L^{(d-1)}_G \rangle$ when computing $B_d$ is
$
\{ x_i b \mid i \in \{t, \ldots, n\},\ b \in \widetilde{B}_{d-1} \}.
$
Clearly, this set can be split as:
\[
\{ x_t b \mid b \in \widetilde{B}_{d-1} \} \cup \{ x_i b \mid i \in \{t+1, \ldots, n\},\ b \in \widetilde{B}_{d-1} \}.
\]

We now show that monomials in $\{ x_i b \mid i \in \{t+1, \ldots, n\},\ b \in \widetilde{B}_{d-1} \}$ do not belong to $\langle L^{(d-1)}_G \rangle$. Suppose, for contradiction, that $x_p b' \in \langle L^{(d-1)}_G \rangle$ for some $x_p \in \{ x_i \mid i \in \{t+1, \ldots, n\} \}$ and $b' \in \widetilde{B}_{d-1}$. Then there exists some $d' \leq d-1$ and some $l \in L^{(d')}_G$ with $\deg(l) = d'$ such that $l \mid x_p b'$.

Since $b' \in \widetilde{B}_{d-1}$, we have $b' \notin \langle L^{(d-1)}_G \rangle$, hence $l \nmid b'$. Let us write $l = x_1^{l_1} \cdots x_n^{l_n}$ and $b' = x_1^{b_1'} \cdots x_n^{b_n'}$. From $l \mid x_p b'$ and $l \nmid b'$, it follows that $l_p = b_p' + 1$ and $l_p \geq 1$. Therefore, $x_p$ must appear in $l$.

Since $l$ is a minimal generator of $L^{(d')}_G$ and we assume the conjecture on generic sequences, all monomials of degree $d'$ larger than $l$ (with respect to the term order) belong to $\langle L^{(d')}_G \rangle$. In particular, any monomial of degree $d'$ involving variables $x_i$ with $i < p$ must be in $\langle L^{(d')}_G \rangle$ due to the properties of graded reverse lexicographic order.

This implies that all monomials of the form $x_i c$ for $i \in \{1, \ldots, p-1\}$ and $c \in M_{d'-1}$ are in $\langle L^{(d')}_G \rangle \subseteq \langle L^{(d-1)}_G \rangle$. However, $r = x_t c \in \widetilde{B}_{d-1}$ for some $c \in M_{d-1}$ and $t < p$, which contradicts the assumption that $r \notin \langle L^{(d-1)}_G \rangle$. Thus, $x_p b' \notin \langle L^{(d-1)}_G \rangle$.

Therefore, the monomials $x_i b$ with $i \in \{t+1, \ldots, n\}$ and $b \in \widetilde{B}_{d-1}$ do not belong to $\langle L^{(d-1)}_G \rangle$ and can be directly included in $B_d$ without further checking.\qed
\end{proof}

\subsection{Effectiveness of Reducing $\widetilde{M}_d$ via Computational Experiments}\label{sub:exp}

In this subsection, we verify the effectiveness of the proposed methods for reducing the number of monomials $\widetilde{M}_d$ used in the computation of $B_d$, as introduced in Sections~\ref{sub:tier1}, \ref{sub:tier2}, and \ref{sub:tier3} (referred to as {\bf tier 1}, {\bf tier 2}, and {\bf tier 3}, respectively). We conduct experiments using the following two test cases:

\begin{itemize}
    \item \textbf{Case 1}: $n = 18$, $m = 19$, $d_i = 2$ for all $i = 1, \ldots, m$.
    \item \textbf{Case 2}: $n = 14$, $m = 10$, $d_i = 2$ for all $i = 1, \ldots, m$.
\end{itemize}

Table~\ref{tab:M} shows the number of elements in $\widetilde{M}_d$ at each degree $d$, comparing the results for tier 0 (the baseline), tier 1, tier 2, and tier 3.

In Case 1 of Table~\ref{tab:M}, we observe that the size of $\widetilde{M}_d$ increases rapidly with $d$ under {\bf tier 0}. In contrast, tier 3 achieves substantial reduction starting from $d = 1$, and the difference becomes even more pronounced as the degree increases. For instance, at $d = 10$, the number of monomials in {\bf tier 0} is $8,\!436,\!285$, while it is only $16,\!796$ in {\bf tier 3}, demonstrating a dramatic decrease. This result implies that the number of monomials requiring membership checks in $B_d$ is significantly reduced, greatly contributing to the overall computational efficiency of the algorithm.

Similarly, Case 2 in Table~\ref{tab:M} confirms that the size of $\widetilde{M}_d$ is markedly reduced using {\bf tier 3}. Compared to Case1 in Table~\ref{tab:M}, the reduction is even more noticeable in the higher-degree ranges. This suggests that the effect of {\bf tier 3} is especially beneficial in cases where $n > m$.

From these results, we experimentally confirm that the proposed reductions in $\widetilde{M}_d$ lead to a significant decrease in the number of monomials that must be examined during the computation of $B_d$. Consequently, the total computation time for determining the leading monomials of a minimal Gr\"obner basis of a generic sequence is greatly reduced.

\begin{table}[t]
  \centering
  \caption{Number of candidate monomials $\widetilde{M}_d$ at each degree $d$ for {\bf tier 0–3} methods.}
  \label{tab:M}
  \begin{minipage}[t]{0.48\textwidth}
    \centering
    \begin{tabular}{c|rrrr}
      \hline
      $d$ & \; {\bf tier 0} & \; {\bf tier 1} & \; {\bf tier 2} & \; {\bf tier 3} \\
      \hline
      1 & 18 & 18 & 18 & 1 \\
      2 & 171 & 171 & 171 & 18 \\
      3 & 1140 & 1095 & 906 & 152 \\
      4 & 5985 & 5064 & 3144 & 798 \\
      5 & 26334 & 17417 & 8794 & 2907 \\
      6 & 100947 & 45331 & 17864 & 7752 \\
      7 & 346104 & 89889 & 29640 & 15504 \\
      8 & 1081575 & 134447 & 31008 & 23256 \\
      9 & 3124550 & 145918 & 32946 & 25194 \\
      10 & 8436285 & 97791 & 16796 & 16796 \\
      \hline
    \end{tabular}
    \subcaption*{Case 1: $n=18,\ m=19,\ d_i=2$}
  \end{minipage}
  \hfill
  \begin{minipage}[t]{0.48\textwidth}
    \centering
    \begin{tabular}{c|rrrr}
      \hline
      $d$ & \; {\bf tier 0} & \; {\bf tier 1} & \; {\bf tier 2} & \; {\bf tier 3} \\
      \hline
      1 & 14 & 14 & 14 & 1 \\
      2 & 105 & 105 & 105 & 14 \\
      3 & 560 & 540 & 440 & 95 \\
      4 & 2380 & 2084 & 1442 & 420 \\
      5 & 8568 & 6364 & 3811 & 1375 \\
      6 & 27132 & 16049 & 8261 & 3598 \\
      7 & 77520 & 34670 & 15618 & 7937 \\
      8 & 203490 & 66257 & 27735 & 15360 \\
      9 & 497420 & 115040 & 45595 & 26880 \\
      10 & 1144066 & 185304 & 70473 & 43520 \\
      11 & 2496144 & 281344 & 103645 & 66304 \\
      \hline
    \end{tabular}
    \subcaption*{Case 2: $n=14,\ m=10,\ d_i=2$}
  \end{minipage}
\end{table}

\section{Improvement of LGB by Reducing the Number of Elements in $L_G$ Used in the Computation of $B_d$}\label{sec:L}

In the LGB algorithm, the computation of $B_d$ requires checking which monomials in a certain set (either $M_d$ or $\widetilde{M}_d$) are not contained in the ideal $\langle L_G \rangle$. As the number of elements in $L_G$ increases, the cost of searching for relevant divisors within $L_G$ grows substantially.

In this section, we focus on the monomial set $\widetilde{M}_d$ used in the {\bf tier 3} method (described in Section~\ref{sub:tier3}), and demonstrate that only a specific subset of $L_G$ is relevant for the divisibility checks. This observation allows us to reduce the number of $L_G$ elements that need to be examined, thereby improving the efficiency of $B_d$ computation.

\subsection{tier 4}\label{sub:tier4}

Based on Theorem~7, $B_d$ can be computed as follows:
\begin{align*}
b &\leftarrow \text{the largest monomial in term order from } \widetilde{B}_{d-1}, \\
x_t &\leftarrow \text{the smallest variable appearing in } b, \\
\widetilde{M}_d &\leftarrow \{ x_t b \mid b \in \widetilde{B}_{d-1} \}, \\
B_d &\leftarrow \{ m \in \widetilde{M}_d \mid m \notin \langle L^{(d-1)}_G \rangle \cap \langle x_t \rangle \} \cup \{ x_i b \mid i \in \{t+1, \ldots, n\},\ b \in \widetilde{B}_{d-1} \}.
\end{align*}

We refer to this method of computing $B_d$ as \textbf{tier 4}.

\begin{theorem}
Let $r$ be the largest monomial in $\widetilde{B}_{d-1}$ with respect to the term order, and let $x_t$ be the smallest variable appearing in $r$. Define
\[
\widetilde{M}_d = \{ x_t b \mid b \in \widetilde{B}_{d-1} \}.
\]
Then,
\[
B_d = \{ m \in \widetilde{M}_d \mid m \notin \langle L^{(d-1)}_G \rangle \cap \langle x_t \rangle \} \cup \{ x_i b \mid i \in \{t+1, \ldots, n\},\ b \in \widetilde{B}_{d-1} \}.
\]
\end{theorem}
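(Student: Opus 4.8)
The plan is to build on Theorem~7 (tier 3), which already establishes that
\[
B_d = \{ m \in \widetilde{M}_d \mid m \notin \langle L^{(d-1)}_G \rangle \} \cup \{ x_i b \mid i \in \{t+1, \ldots, n\},\ b \in \widetilde{B}_{d-1} \},
\]
with $\widetilde{M}_d = \{ x_t b \mid b \in \widetilde{B}_{d-1}\}$. Since the second (union) summand is identical in both statements, it suffices to show that for $m \in \widetilde{M}_d$ we have $m \in \langle L^{(d-1)}_G \rangle$ if and only if $m \in \langle L^{(d-1)}_G \rangle \cap \langle x_t \rangle$; equivalently, every element of $\widetilde{M}_d$ that lies in $\langle L^{(d-1)}_G \rangle$ already lies in $\langle x_t \rangle$. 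The "only if" direction of the equivalence is trivial since $\langle L^{(d-1)}_G \rangle \cap \langle x_t \rangle \subseteq \langle L^{(d-1)}_G \rangle$. So the whole content is the forward inclusion.

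For that, take $m = x_t b$ with $b \in \widetilde{B}_{d-1}$, and suppose $m \in \langle L^{(d-1)}_G \rangle$, so some minimal generator $l \in L^{(d')}_G$ with $d' \le d-1$ divides $x_t b$. Because $b \in \widetilde{B}_{d-1}$ means $b \notin \langle L^{(d-1)}_G \rangle$, we have $l \nmid b$; as in the proof of Theorem~7, writing $l = x_1^{l_1}\cdots x_n^{l_n}$ and $b = x_1^{b_1}\cdots x_n^{b_n}$, the only way $l \mid x_t b$ while $l \nmid b$ is $l_t = b_t + 1 \ge 1$, so $x_t \mid l$. Hence $x_t \mid l \mid x_t b = m$ in the monomial-ideal sense, giving $m \in \langle x_t \rangle$, and therefore $m \in \langle L^{(d-1)}_G \rangle \cap \langle x_t \rangle$. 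This establishes the forward inclusion, and combined with the trivial reverse inclusion and the unchanged union term, it yields the claimed identity.

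Concretely, the steps I would carry out are: (i) invoke Theorem~7 to reduce the goal to the claim that $\{ m \in \widetilde{M}_d \mid m \notin \langle L^{(d-1)}_G \rangle\} = \{ m \in \widetilde{M}_d \mid m \notin \langle L^{(d-1)}_G \rangle \cap \langle x_t \rangle\}$; (ii) note one containment is immediate from $\langle L^{(d-1)}_G \rangle \cap \langle x_t \rangle \subseteq \langle L^{(d-1)}_G \rangle$; (iii) for the other containment, show that any $m = x_t b \in \widetilde{M}_d$ lying in $\langle L^{(d-1)}_G \rangle$ must lie in $\langle x_t \rangle$, via the divisibility-exponent argument above (a divisor $l$ of $x_t b$ not dividing $b$ must contain $x_t$); (iv) conclude.

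The main obstacle is essentially presentational rather than mathematical: one must be careful that $m \in \widetilde{M}_d$ genuinely has $x_t$ as a factor by construction, and that when $l$ divides $x_t b$ but not $b$, the extra factor is picked up precisely in the $x_t$-exponent — this uses that $\widetilde{M}_d$ was defined with the \emph{same} variable $x_t$ multiplied onto each $b$. If $\widetilde{M}_d$ allowed different variables, the argument would break, so I would emphasize that the tier-3 reduction to a single-variable product $\{x_t b\}$ is exactly what makes the refinement to $\langle L^{(d-1)}_G \rangle \cap \langle x_t\rangle$ valid. No deeper difficulty is expected; the weakly reverse-lexicographic hypothesis is only needed indirectly through Theorem~7 and does not need to be re-invoked here.
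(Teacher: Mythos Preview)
Your proof is correct and uses essentially the same core argument as the paper: the exponent comparison showing that any minimal generator $l \in L_G^{(d-1)}$ dividing $x_t b$ with $b \notin \langle L_G^{(d-1)}\rangle$ must itself be divisible by $x_t$. The paper argues this directly, while you first reduce via the tier~3 theorem and then establish the set equality, but the substantive step is identical.

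One remark worth making: read literally as an intersection of ideals, the condition $m \notin \langle L_G^{(d-1)}\rangle \cap \langle x_t\rangle$ is equivalent to $m \notin \langle L_G^{(d-1)}\rangle$ for $m \in \widetilde{M}_d$ simply because every $m = x_t b$ already lies in $\langle x_t\rangle$; so your detour through $x_t \mid l$ to conclude $m \in \langle x_t\rangle$ is, strictly speaking, unnecessary for that formal equality. What the paper (and the algorithm) actually cares about is the stronger statement that only generators $l \in L_G^{(d-1)}$ with $x_t \mid l$ need be tested, and your argument does establish exactly this---as you yourself note in your closing paragraph. So your proof captures the intended algorithmic content, just with a slightly roundabout framing of what is being concluded.
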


\begin{proof}
Let $b' \in \widetilde{B}_{d-1}$ and $l \in L^{(d-1)}_G$. Suppose $l$ divides $x_t b'$. We show that $x_t$ must be one of the variables appearing in $l$.

Since $b' \in \widetilde{B}_{d-1}$, it follows that $b' \notin \langle L^{(d-1)}_G \rangle$, and thus $l \nmid b'$. Write $l = x_1^{l_1} \cdots x_n^{l_n}$ and $b' = x_1^{b'_1} \cdots x_n^{b'_n}$. The condition $l \mid x_t b'$ and $l \nmid b'$ implies that $l_t = b'_t + 1$, and thus $l_t \geq 1$. Therefore, $x_t$ must be one of the variables that appear in $l$.

Hence, for any $l \in L^{(d-1)}_G$ that divides a monomial in $\widetilde{M}_d = \{ x_t b \mid b \in \widetilde{B}_{d-1} \}$, it must be the case that $l \in \langle x_t \rangle$. That is, only those monomials in $L^{(d-1)}_G$ divisible by $x_t$ are relevant in checking whether an element of $\widetilde{M}_d$ lies in $\langle L^{(d-1)}_G \rangle$.

Therefore, when checking whether $m \in \widetilde{M}_d$ is in $\langle L^{(d-1)}_G \rangle$, it suffices to check only the elements of $L^{(d-1)}_G$ that lie in $\langle x_t \rangle$, i.e., in $\langle L^{(d-1)}_G \rangle \cap \langle x_t \rangle$. \qed
\end{proof}

\subsection{Effectiveness through Computational Experiments}\label{sub:exp2}

In this subsection, we experimentally verify the effectiveness of the {\bf tier 4} method described in Section~\ref{sub:tier4}, which reduces the number of elements in $L_G^{(d-1)}$ used during the computation of $B_d$ from $\widetilde{M}_d$. The following two cases, also used in Section~\ref{sub:exp}, are considered:

\begin{itemize}
    \item \textbf{Case 1}: $n = 18$, $m = 19$, $d_i = 2$ for all $i = 1, \ldots, m$,
    \item \textbf{Case 2}: $n = 14$, $m = 10$, $d_i = 2$ for all $i = 1, \ldots, m$.
\end{itemize}

We compare the {\bf tier 3} and {\bf tier 4} approaches.
We define $(L_G^{(d-1)})_{x_t}$ to be the subset of $L_G^{(d-1)}$ consisting of monomials divisible by $x_t$.
Table~\ref{tab:lg} shows, for each degree $d$, the number of elements in $L^{(d-1)}_G$ or in $(L_G^{(d-1)})_{x_t}$, where the latter denotes the subset of $L_G^{(d-1)}$ required for computing $B_d$ using the {\bf tier 4} method. 

From Table~\ref{tab:lg}, we observe that the number of elements to be searched is significantly reduced in {\bf tier 4} compared to {\bf tier 3}. In particular, the reduction becomes more pronounced as the degree increases. Moreover, in Case 1 at $d = 8$ and in Case 2 at $d = 3$, the number of elements to be examined using {\bf tier 4} becomes zero.

These results confirm that {\bf tier 4} greatly reduces the number of elements from $L_G$ that need to be examined when computing $B_d$ from $\widetilde{M}_d$. This reduction contributes substantially to improving the efficiency of the LGB algorithm.
\begin{table}[t]
  \centering
  \caption{Number of elements in $L_G^{(d-1)}$ and $(L_G^{(d-1)})_{x_t}$ at each degree $d$ for {\bf tier 3} and {\bf tier 4}.}
  \label{tab:lg}
  \begin{minipage}[t]{0.48\textwidth}
    \centering
    \begin{tabular}{c|rr}
      \hline
      $d$ & \; {\bf tier 3}  & \; {\bf tier 4}  \\
      \hline
      2  & 0     & 0     \\
      3  & 19    & 4     \\
      4  & 79    & 6     \\
      5  & 268   & 82    \\
      6  & 818   & 234   \\
      7  & 2242  & 1102  \\
      8  & 5320  & 0     \\
      9  & 11134 & 5814  \\
      10 & 21470 & 8398  \\
      \hline
    \end{tabular}
    \subcaption*{Case 1: $n = 18$, $m = 19$, $d_i = 2$}
  \end{minipage}
  \hfill
  \begin{minipage}[t]{0.48\textwidth}
    \centering
    \begin{tabular}{c|rr}
      \hline
      $d$ & \; {\bf tier 3} & \; {\bf tier 4} \\
      \hline
      2  & 0   & 0   \\
      3  & 10  & 0   \\
      4  & 30  & 4   \\
      5  & 69  & 25  \\
      6  & 132 & 48  \\
      7  & 216 & 42  \\
      8  & 306 & 132 \\
      9  & 381 & 207 \\
      10 & 416 & 242 \\
      11 & 425 & 251 \\
      \hline
    \end{tabular}
    \subcaption*{Case 2: $n = 14$, $m = 10$, $d_i = 2$}
  \end{minipage}
\end{table}
\section{Improved LGB and Its Effectiveness}\label{sec:imp}
\subsection{Improved LGB}
In this subsection, we present an improved version of the LGB algorithm by incorporating the methods introduced in Sections~\ref{sec:d}, \ref{sec:M}, and \ref{sec:L}, resulting in a more efficient computation. The improved algorithm is described in Algorithm~3, and we explain the modifications and theoretical foundations that lead to enhanced performance.

The basic structure of Algorithm~3 is the same as that of Algorithm~2, but it is optimized in the following key aspects:
\begin{enumerate}
    \item \textbf{Improved loop termination condition} (lines 2--6, 10): \\
    In Algorithm~2, the loop is terminated when the Hilbert series $H_{R/\langle L_G \rangle}(z)$ computed via the HPS algorithm matches the target Hilbert series $H_{R/\langle F \rangle}(z)$. In Algorithm~3, this termination condition is replaced with a degree-based bound $D$, avoiding repeated HPS computations. The bound $D$ is determined as follows, as discussed in Section~\ref{sec:d}:
    \begin{itemize}
        \item If $n \leq m$, the Hilbert series $H_{R/\langle F \rangle}(z)$ is finite. Then, $D$ is set to one more than the degree of this series.
        \item If $n > m$, the Hilbert series is infinite, but the Macaulay bound provides an upper bound:
        $
        D = \sum_{i=1}^m (d_i - 1) + 1.
        $
    \end{itemize}
    \item \textbf{Reduction in the number of monomials checked for divisibility and reduction in the number of elements in $L_G$ used for those checks} (lines 9, 11--14, 17): \\
    In Algorithm~2, all degree-$d$ monomials are checked for membership in $\langle L_G \rangle$ at each iteration. However, as shown in Sections~\ref{sec:M} and \ref{sec:L}, only a smaller subset of these monomials and generators needs to be considered.
\end{enumerate}

The modifications related to monomial divisibility checks and relevant $L_G$ elements are described below. In Algorithm~3, $\widetilde{B}_d$ denotes the set of monomials of degree $d$ not contained in $\langle L_G \rangle$. Thus, line 9 initializes $\widetilde{B}_0 = \{1\}$, and the loop begins from $d = 1$.

In line 11, $b$ is the largest monomial in $B_d$ (in term order), and $x_t$ is the smallest variable in $b$. As shown in Section~\ref{sub:tier3}, only monomials in
$
\widetilde{M}_d = \{ x_t b \mid b \in \widetilde{B}_{d-1} \}
$
need to be checked for membership. Furthermore, according to Section~\ref{sub:tier4}, only the elements of $L_G$ divisible by $x_t$ need to be considered in this check. Thus, line 14 filters $\widetilde{M}_d$ by removing monomials divisible by $\langle L_G \rangle \cap \langle x_t \rangle$, and adds monomials formed by multiplying $\widetilde{B}_{d-1}$ by variables larger than $x_t$.

In line 17, the $N_d$ largest monomials selected into $L_G$ are removed from $B_d$ to produce $\widetilde{B}_d$.

\begin{algorithm}[t]
    \begin{algorithmic}[1]    
    \caption{{\bf LGB} (improved)}
    \label{alg2}
        \REQUIRE $n \in \mathbb{N}$, $m \in \mathbb{N}$, $\{d_i\}_{i=1}^{m} \in \mathbb{N}^m$
        \ENSURE Leading monomials $L_G$ of minimal Gr\"obner basis for a generic sequence $F = \{f_1,...,f_m\}$ with $n$ variables, $m$ polynomials and $\mathrm{deg}(f_i)=d_i$
        \STATE $H_{R/\langle F \rangle}(z) \leftarrow \left[ \frac{\prod^{m}_{i=1}(1-z^{d_i})}{(1-z)^n} \right]$
        \IF{$n \leq m$}
            \STATE $D \leftarrow \rm{deg}(H_{R/\langle F \rangle}(z))$
        \ELSE
            \STATE $D \leftarrow \sum_{i=1}^{m} (d_i-1)+1$
        \ENDIF
        \STATE $L_G \leftarrow \varnothing$
        \STATE $d \leftarrow 1$
        \STATE $\widetilde{B}_0 \leftarrow \{1\}$
        \WHILE{$d \leq D$}
        \STATE $b \leftarrow \text{ the largest element in term order from } B_d$\\
        \STATE $x_t \leftarrow \text{the smallest variable in } b$\\
        \STATE $\widetilde{M}_d \leftarrow \{x_t b \mid b \in \widetilde{B}_{d-1}\}$
        \STATE $B_d \leftarrow \{m \in \widetilde{M}_d \mid m \notin \langle L_G^{(d-1)} \rangle \cap \langle x_t \rangle \} \cup \{x_i b \mid i \in \{t+1,...,n\}, b \in \widetilde{B}_{d-1}\}$
        \STATE $N_d \leftarrow \#B_d - h_{R/\langle F \rangle}(d)$ 
        \STATE $L_G \leftarrow L_G \cup \{ $the $N_d$ largest elements in term order from $B_d\}$ 
        \STATE $\widetilde{B}_d \leftarrow B_d \setminus \{\text{the } N_d \text{ largest elements in term order from } B_d\}$
        \STATE $d \leftarrow d+1$
        \ENDWHILE
        \RETURN $\{ L_G \}$
    \end{algorithmic}
\end{algorithm}
\subsection{Computation Time and Memory Usage of LGB}

In this subsection, we present a comparison between the proposed LGB algorithm and traditional Gr\"obner basis computation in terms of computation time and memory usage for computing the leading monomials of a minimal Gröbner basis of generic sequences.

The experiments are conducted under the following three cases, each with a different number of equations $m$ relative to the number of variables $n$, and with all polynomial degrees set to $2$:
\begin{itemize}
    \item $m = n - 1$, with $d_i = 2$ for all $i = 1, \ldots, m$,
    \item $m = n$, with $d_i = 2$ for all $i = 1, \ldots, m$,
    \item $m = n + 1$, with $d_i = 2$ for all $i = 1, \ldots, m$.
\end{itemize}

We fix the degrees $d_i$ to be $2$ in all cases to avoid excessive computation time for Gr\"obner basis computations, which would make comparison with the LGB algorithm impractical for larger degrees.

Gr\"obner basis computations were performed using the function \texttt{GrobnerBasis()} in \textsf{Magma} V2.28-4. The input systems were homogeneous polynomial systems over the finite field $\mathbb{F}_{32003}$, with randomly chosen coefficients. While the LGB algorithm is theoretically designed over infinite fields, computations over large finite fields are empirically known to yield similar results. Moreover, infinite field computations tend to involve coefficient swell, making them unsuitable for large-scale practical experiments.

The LGB computations were also implemented in \textsf{Magma} V2.28-4 using Algorithm~3. These experiments were executed on a system powered by an AMD EPYC 7763 CPU with 2~TB of RAM. Both computation time and memory usage were measured.

Figure~1 shows the comparison results of computation time for the three cases. Notably, in the case of $m = n + 1$, the proposed LGB algorithm significantly outperforms the standard Gr\"obner basis computation. This suggests that the benefit of using LGB becomes especially pronounced when $m > n$.

Even in the $m = n$ and $m = n - 1$ cases, the performance gap increases as the problem size grows. For example, in the $m = n - 1$ case, which exhibits the smallest gap among the three, the runtime for $n = 16$ shows that LGB is approximately $2000$ times faster than the Gr\"obner basis computation. The steeper slope in the runtime graph for Gr\"obner basis suggests that the gap will become even larger for bigger parameter sizes.

\begin{figure}[]
    \begin{center}
    \includegraphics[width = 10cm]{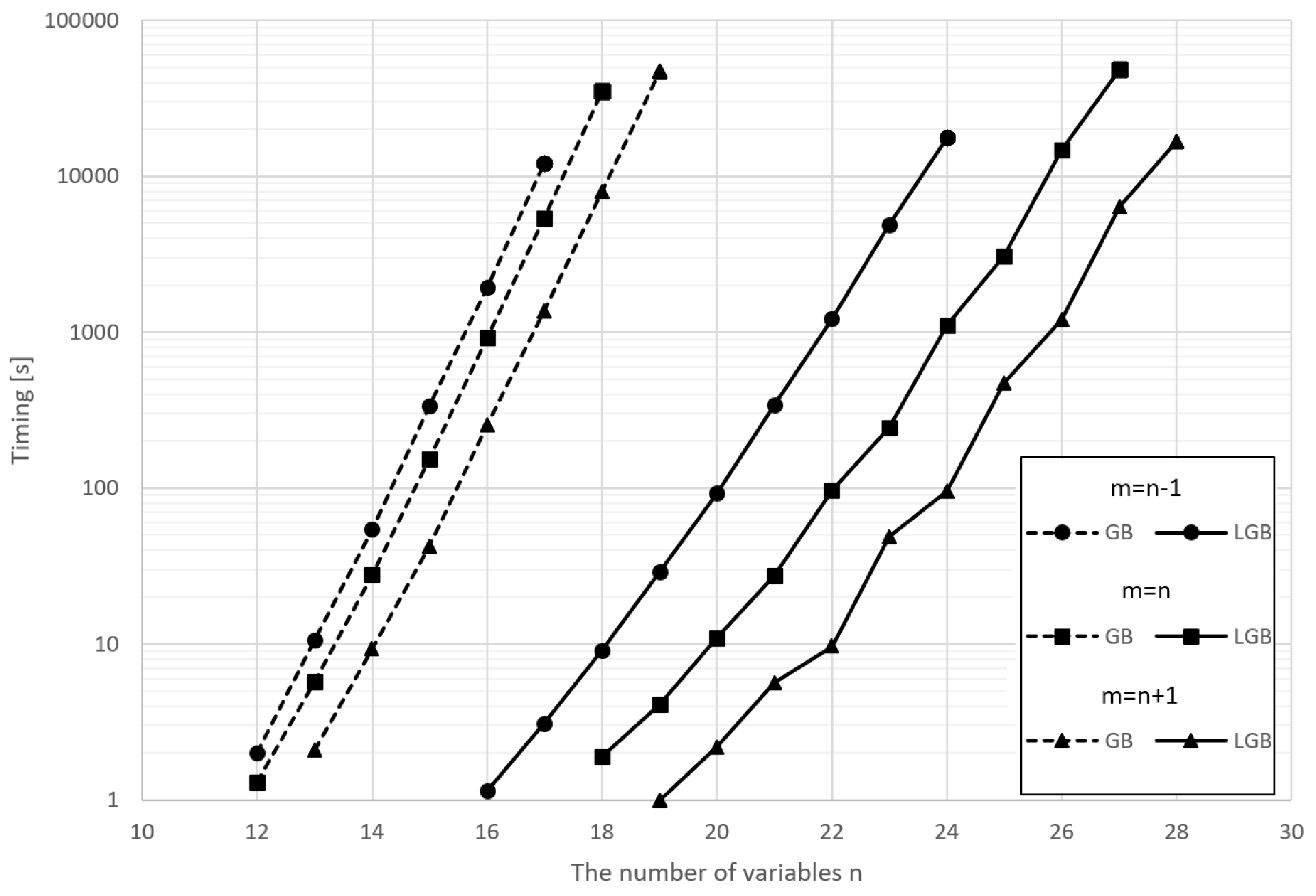}
    \caption{Comparison of timing between Gr\"obner basis computation (GB) and LGB}
    \label{pic1}
\quad \\
    \includegraphics[width = 10cm]{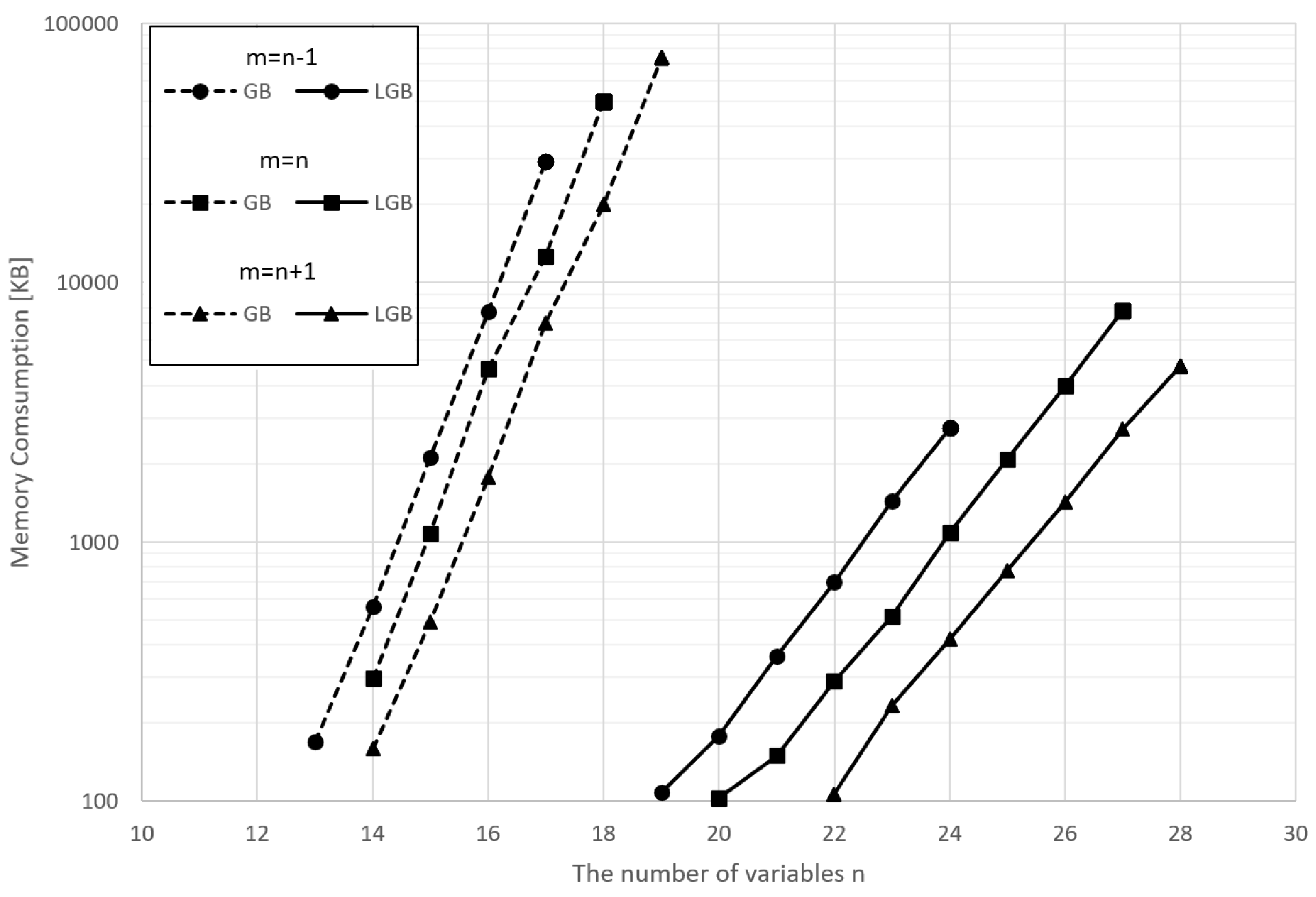}
    \caption{Comparison of memory consumption between Gr\"obner Basis computation (GB) and LGB}
    \label{pic2}
    \end{center}
\end{figure}

\section*{Acknowledgement}
This work was supported by JST K Program Grant Number
JPMJKP24U2, Japan.
This work was also supported by JSPS KAKENHI Grant Number JP25K00140.

\end{document}